  \newcommand\figcaption{\def\@captype{figure}\caption}
  \newcommand\tabcaption{\def\@captype{table}\caption}
\begin{document}

\title{Quantum Preference Query}

\if 0
\author{Ben Trovato}
\authornote{Both authors contributed equally to this research.}
\email{trovato@corporation.com}
\orcid{1234-5678-9012}
\author{G.K.M. Tobin}
\authornotemark[1]
\email{webmaster@marysville-ohio.com}
\affiliation{%
  \institution{Institute for Clarity in Documentation}
  \streetaddress{P.O. Box 1212}
  \city{Dublin}
  \state{Ohio}
  \country{USA}
  \postcode{43017-6221}
}

\author{Lars Th{\o}rv{\"a}ld}
\affiliation{%
  \institution{The Th{\o}rv{\"a}ld Group}
  \streetaddress{1 Th{\o}rv{\"a}ld Circle}
  \city{Hekla}
  \country{Iceland}}
\email{larst@affiliation.org}

\author{Valerie B\'eranger}
\affiliation{%
  \institution{Inria Paris-Rocquencourt}
  \city{Rocquencourt}
  \country{France}
}

\author{Aparna Patel}
\affiliation{%
 \institution{Rajiv Gandhi University}
 \streetaddress{Rono-Hills}
 \city{Doimukh}
 \state{Arunachal Pradesh}
 \country{India}}

\author{Huifen Chan}
\affiliation{%
  \institution{Tsinghua University}
  \streetaddress{30 Shuangqing Rd}
  \city{Haidian Qu}
  \state{Beijing Shi}
  \country{China}}

\author{Charles Palmer}
\affiliation{%
  \institution{Palmer Research Laboratories}
  \streetaddress{8600 Datapoint Drive}
  \city{San Antonio}
  \state{Texas}
  \country{USA}
  \postcode{78229}}
\email{cpalmer@prl.com}

\author{John Smith}
\affiliation{%
  \institution{The Th{\o}rv{\"a}ld Group}
  \streetaddress{1 Th{\o}rv{\"a}ld Circle}
  \city{Hekla}
  \country{Iceland}}
\email{jsmith@affiliation.org}

\author{Julius P. Kumquat}
\affiliation{%
  \institution{The Kumquat Consortium}
  \city{New York}
  \country{USA}}
\email{jpkumquat@consortium.net}

\renewcommand{\shortauthors}{Trovato et al.}
\fi

\author{Hao Liu}
\affiliation{%
  \institution{The Hong Kong University of Science and Technology}
  \country{Hong Kong, China}}
\email{hliubs@cse.ust.hk}

\author{Xiaotian You}
\affiliation{%
  \institution{The Hong Kong University of Science and Technology}
  \country{Hong Kong, China}}
\email{xyouaa@ust.hk}

\author{Raymond Chi-Wing Wong}
\affiliation{%
  \institution{The Hong Kong University of Science and Technology}
  \country{Hong Kong, China}}
\email{raywong@cse.ust.hk}

\begin{abstract}
Given a large dataset of many tuples, it is hard for users to pick out their preferred tuples. Thus, the preference query problem, which is to find the most preferred tuples from a dataset, is widely discussed in the database area. In this problem, a utility function is given by the user to evaluate to what extent the user prefers the tuple. However, considering a dataset consisting of $N$ tuples, the existing algorithms need $O(N)$ time to answer a query, or need $O(N)$ time for a cold start to answer a query. The reason is that in a classical computer, a linear time is needed to evaluate the utilities by the utility function for $N$ tuples. In this paper, we discuss the Quantum Preference Query (QPQ) problem. In this problem, the dataset is given in a quantum memory, and we use a quantum computer to return the answers. Taking the advantage of quantum parallelism, the quantum algorithm can theoretically perform better than their classical competitors. To better cover all the possible study directions, we discuss this problem in different kinds of input and output. In the QPQ problem, the input can be a number $k$ or a threshold $\theta$. Given $k$, the problem is to return $k$ tuples with the highest utilities. Given $\theta$, the problem is to return all the tuples with utilities higher than $\theta$. Also, in QPQ problem, the output can be classical (i.e., a list of tuples) or quantum (i.e., a superposition in quantum bits). Based on amplitude amplification and post-selection, we proposed four quantum algorithms to solve the problems in the above four scenarios. We give an accuracy analysis of the number of memory accesses needed for each quantum algorithm, which shows that the proposed quantum algorithms are at least quadratically faster than their classical competitors. In our experiments, we did simulations to show that to answer a QPQ problem, the quantum algorithms achieve up to 1000$\times$ improvement in number of memory accesses than their classical competitors, which proved that QPQ problem could be a future direction of the study of preference query problems.
\end{abstract}

\if 0
\begin{CCSXML}
<ccs2012>
 <concept>
  <concept_id>00000000.0000000.0000000</concept_id>
  <concept_desc>Do Not Use This Code, Generate the Correct Terms for Your Paper</concept_desc>
  <concept_significance>500</concept_significance>
 </concept>
 <concept>
  <concept_id>00000000.00000000.00000000</concept_id>
  <concept_desc>Do Not Use This Code, Generate the Correct Terms for Your Paper</concept_desc>
  <concept_significance>300</concept_significance>
 </concept>
 <concept>
  <concept_id>00000000.00000000.00000000</concept_id>
  <concept_desc>Do Not Use This Code, Generate the Correct Terms for Your Paper</concept_desc>
  <concept_significance>100</concept_significance>
 </concept>
 <concept>
  <concept_id>00000000.00000000.00000000</concept_id>
  <concept_desc>Do Not Use This Code, Generate the Correct Terms for Your Paper</concept_desc>
  <concept_significance>100</concept_significance>
 </concept>
</ccs2012>
\end{CCSXML}

\ccsdesc[500]{Do Not Use This Code~Generate the Correct Terms for Your Paper}
\ccsdesc[300]{Do Not Use This Code~Generate the Correct Terms for Your Paper}
\ccsdesc{Do Not Use This Code~Generate the Correct Terms for Your Paper}
\ccsdesc[100]{Do Not Use This Code~Generate the Correct Terms for Your Paper}

\keywords{Do, Not, Us, This, Code, Put, the, Correct, Terms, for,
  Your, Paper}
\fi


\maketitle

\section{Introduction}
\label{sec:introduction}
Given a dataset described by several attributes, 
the problem of finding the user's favorite tuples among the dataset is widely involved in many scenarios. For example, purchasing a car, buying a house, and picking a red wine.

Consider the scenario that Alice wants to buy a used car,
where each car is described by some attributes, e.g., price and horsepower. 
Alice may want to purchase a cheap car with a high horsepower.
In the literature, the user preference is usually modeled as a \emph{utility function}.
It quantifies a user's trade-off among attributes and characterizes a vector comprising one weight per attribute. 
Each weight represents the importance of the attribute to a user. 
Based on the utility function, we can obtain a \emph{utility} (i.e., a function score) for each tuple.
The utility indicates to what extent the user prefers the tuple, where a larger utility means that the tuple is more favored by the user. 

Many operators have been proposed to assist users in finding their favorite tuples from a dataset with multiple attributes.
Such operators usually are known as \emph{multi-criteria decision-making tools}.
One representative operator is the preference query \cite{yiu2007top, pivert2012fuzzy, rocha2010efficient, wang2017skyline}. 
The preference query returns tuples with the highest utilities, namely the preferred tuples, w.r.t. the utility function given by a user.
It has many real-world applications such as the recommendation systems \cite{miao2016si2p}, E-commerce \cite{doring2008advanced} and social network \cite{schenkel2008efficient}.

However, given a dataset consisting of $N$ tuples, any preference query algorithm in a classical computer needs $\Omega(N)$ time in total to return the preferred tuples. This observation depends on the fact that a classical algorithm always needs a linear time to calculate the utilities for all the tuples. The efficiency issue leads to the limitation of scalability.

Recently, quantum algorithms have attracted more and more attention. Many quantum algorithms like Shor's algorithm \cite{shor1994algorithms} and Grover's algorithm\cite{grover1996fast} have been proposed and are expected to show quadratic or even exponential speedup compared to classical algorithms. Based on Grover's algorithm, DH algorithm \cite{durr1996quantum} was proposed to find a minimum value from an unsorted table of $N$ values in $O(\sqrt{N})$ time, which is quadratically faster than any classical algorithm. D{\"u}rr et al. \cite{durr2006quantum} proposed a quantum $k$-minima algorithm that returns $k$ smallest values in $O(\sqrt{Nk})$ time. Unfortunately, it only works on a set of integer numbers. Since the typical tuples in a preference query consist of multiple attributes, the existing algorithm cannot be applied on a preference query. Moreover, there is no accurate analysis of the quantum $k$-minima algorithm in \cite{durr2006quantum}, such that we cannot compare the quantum algorithm with a classical algorithm in a real-world scenario.

Motivated by the limitation of the classical preference query and the existing quantum query, we want to combine the strength of both queries.  
Formally, we propose a problem called \emph{\underline{Q}uantum \underline{P}reference \underline{Q}uery (Problem QPQ)}, which finds the user's preferred tuples with multi-attributes with the help of quantum computation. 
Specifically, we return a list of tuples with the highest utilities from a dataset stored in a quantum memory. The utilities are calculated by a utility function given by a user, and we cannot assume any prior knowledge about the utility function. 

There are two distinctive characteristics of our problem QPQ. 
The first characteristic is that it allows tuples to be described by multiple attributes. 
This largely enhances its applicability in daily life scenarios. 
For example, a second-hand car may have three attributes, which are price, horsepower and used mileage. Alice may want to purchase a cheap second-hand car with a high horsepower, and Bob may want to purchase a high-horsepower car with a low used mileage. Since multiple attributes are considered in users' decision-making, the utility function is supposed to allow multiple variables. 

The second characteristic is the utilization of Quantum computation. 
Since the number of quantum states increases exponentially when the number of quantum bits is increasing linearly, the quantum computer is supposed to be more powerful to help a lot of calculations. In recent years, studies like \cite{li2021vsql, jerbi2021parametrized, li2021quantum} in the machine learning field are looking for applications of quantum models. To use variational quantum algorithms, where only the models are running on quantum computers and the optimizers are classical, becomes a popular direction in ML. In the QPQ problem, taking the advantage of quantum parallelism, we are able to calculate the utilities for all the tuples in a single step, which shows that the QPQ algorithm has the potential to answer a preference query more efficiently than a classical algorithm.

Based on real-world observations, we conclude that there are two types of QPQ input. The first one is to obtain the top-$k$ preferred tuples given $k$, which is called \emph{QPQ}$_k$. The other one is to obtain all the preferred tuples with utilities higher than a threshold $\theta$ given $\theta$, which is called \emph{QPQ$_\theta$}. Also, considering the real application in future quantum computers, we discuss two types of QPQ output. The first one is the classical output, which is to return all the preferred tuples in a list. The second one is the quantum output, which is to return quantum bits in a superposition of all the preferred tuples, where superposition is the ability of a quantum system to be in multiple states simultaneously. Therefore, we give a comprehensive discussion of QPQ problem in totally $4$ scenarios, which are almost all the scenarios we will meet in the future.

To the best of our knowledge, we are the first to study problem QPQ. 
There are some closely related studies. $\tau$-LevelIndex \cite{zhang2022t} proposed recently can answer a top-$k$ preference query very efficiently. However, the efficiency deteriorates rapidly when the dimension of the tuple grows. Moreover, it needs extra time to preprocess the data. Quick selection \cite{cormen2022introduction} is also a very popular method to pick out the preferred tuples. However, it needs a linear time to return the answer, which is not efficient enough. 


Our contributions are described as follows.
\begin{itemize}
    \item To the best of our knowledge, we are the first to propose the problem of quantum preference query.
    
    \item We propose algorithms for QPQ problems in 4 different real-world scenarios. We show that their time complexities have asymptotically quadratic improvement over any classical algorithms that require linear time to answer the preference queries. Thus, the time complexities of our algorithms can never be achieved by any classical algorithm.
    
    \item We conducted experiments to demonstrate the superiority of our algorithms. 
    Under typical settings, our algorithms are up to 1000$\times$ faster than classical competitors in number of memory accesses.
\end{itemize}

The rest of the paper is organized as follows. In Section~\ref{sec:preliminary}, we introduce some basic knowledge used in this paper about quantum algorithms. We discuss the related work in Section~\ref{sec:related work}. 
The formal problem definitions and relevant preliminaries are shown in Section~\ref{sec:problem definition}. 
Section~\ref{sec:algorithm} describes algorithms to solve QPQ problems in $4$ different scenarios.
Experiments are shown in Section~\ref{sec:experiments}. 
Section~\ref{sec:conclusion} concludes our paper.

\section{Preliminaries}
\label{sec:preliminary}
To introduce the model of quantum computing, we first recap the following concepts in the classical computing.
\begin{itemize}
	\item \textbf{C1}: When data is stored in memory or on disks, the \emph{bit} is a basic unit.
	\item \textbf{C2}: The bit has two \emph{states}, which are $0$ and $1$.
	\item \textbf{C3}: By a \emph{read} operation, we can obtain the state of a bit.
	\item \textbf{C4}: By a logic \emph{gate}, we can change the state of a bit.
\end{itemize}

The concepts corresponding to C1, C2, C3 and C4 in quantum computing are introduced in the following.

\textbf{C1}: In a quantum computer, we have the \emph{quantum bit}. For simplicity, we call it a \emph{qubit}. Usually, we use $q_{n-1},\cdots, q_0$ (or simply a greek letter says $\psi$) to denote a list of $n$ qubits.

\textbf{C2}: The qubit has \emph{quantum states}. We use the \emph{Dirac notation}~\cite{dirac1939new, bayer2002organization} (i.e., ``$\ket \cdot$'') to represent a quantum state. Inside this Dirac notation, we write an integer 0 or 1 to denote a \emph{basis state}, or write the notation of the qubit to denote a \emph{mixed state}. For example, $\ket 0$ and $\ket 1$ are two basis states of a qubit. For a qubit says $q$, $\ket q$ denotes a mixed state of this qubit (more formally known as \emph{superposition}), which is a state ``between'' $\ket 0$ and $\ket 1$. A mixed state can be represented by a linear combination of the two basis states:
$$\ket q = \alpha \ket 0 + \beta \ket 1$$
where $\alpha$ and $\beta$ are two complex numbers called the \emph{amplitudes}. In quantum computing, we have $|\alpha|^2 + |\beta|^2=1$ for any quantum state, where $|\alpha|^2$ denotes the absolute square of $\alpha$.

\textbf{C3}: We can \emph{measure} a qubit.
The measurement will collapse the superposition of the qubit $q$ so that we can only obtain the result state 0 with probability $|\alpha|^2$ or state 1 with probability $|\beta|^2$. For a concrete example, let us assume that $q$ has the following mixed state:
$$\ket q =  (0.48-0.64i)\ket 0 +  (0.36+0.48i)\ket 1.$$
If we measure $q$, we can obtain 0 with probability $|0.48-0.64i|^2=0.48^2+0.64^2=0.64$ or 1 with probability $|0.36+0.48i|^2=0.36^2+0.48^2=0.36$.
Note that the sum of the two probabilities is 1.

Consider a list $\psi$ of two qubits $q_1, q_0$. We write a list of Dirac notations in order to denote the state of a list of qubits. For instance, it can be observed that the basis states of $\psi$ are $\ket{0}\!\ket{0}$, $\ket{0}\!\ket{1}$, $\ket{1}\!\ket{0}$ and $\ket{1}\!\ket{1}$. If we measure a qubit in a list of multiple qubits, the state of the other qubits could be changed. This phenomenon is called \emph{quantum entanglement} \cite{schrodinger1935discussion}. For example, assume that the two qubits $q_1$ and $q_0$ combining a list of qubits $\psi$ (or simply called a \emph{quantum register}) are represented as
\begin{align*}
\ket{\psi}=\ket{q_1}\!\ket{q_0} &= 0.6\ket 0(0.8 \ket 0 + 0.6 \ket 1)+0.8\ket 1(\frac{1}{\sqrt{2}}\ket 0+\frac{1}{\sqrt{2}}\ket 1)\\
&= 0.48\ket{0}\!\ket{0} + 0.36\ket{0}\!\ket{1} + \frac{0.8}{2}\ket{1}\!\ket{0} + \frac{0.8}{2}\ket{1}\!\ket{1},
\end{align*}
where $\ket{q_1} = 0.6\ket 0 + 0.8\ket 1$ is expressed outside the parentheses, and $\ket{q_0}$ is expressed inside the parentheses and depends on \ket{q_1} due to quantum entanglement. If we measure $q_1$, we will obtain 0 with probability 0.36 or 1 with probability 0.64. If we obtain 0 for $q_1$, the state of $q_0$ will be changed to $\ket{q_0}=0.8 \ket 0 + 0.6 \ket 1$. Otherwise, the state of $q_0$ will be changed to $\ket{q_0}=\frac{1}{\sqrt{2}}\ket 0+\frac{1}{\sqrt{2}}\ket 1$. In a word, the measurement of $q_1$ changes the state of $q_0$. These two entangled qubits have $4$ basis states $\ket{0}\!\ket{0}, \ket{0}\!\ket{1}, \ket{1}\!\ket{0}, \ket{1}\!\ket{1}$ with amplitudes $0.48, 0.36, \frac{0.8}{2}, \frac{0.8}{2}$, respectively. Similarly, we can extend this system to $n$ qubits $q_{n-1}, q_{n-2}, \cdots, q_0$ with $2^n$ basis states and the corresponding $2^n$ amplitudes.

\textbf{C4}: The state of qubits can be transformed by \emph{quantum gates}. For example, \emph{Z-gate} \cite{nielsen2001quantum} is a very useful quantum gate, which turns $\ket 0$ into $\ket{0}$ and turns $\ket 1$ into $-\ket{1}$. Following common approaches, we illustrate the process of quantum transformation with \emph{quantum circuit}. The following shows an example of a quantum circuit which consists of a qubit (denoted by $q$), a classical register (denoted by $c$), a $Z$-gate (denoted by the box containing a ``Z''), and a measurement (denoted by an icon like a real meter).
\begin{center}
	\scalebox{0.8}{
\Qcircuit @C=1.0em @R=1.0em @!R { 
	 	\nghost{{q} :  } & \lstick{{q} :  } & \gate{\mathrm{Z}} & \meter & \qw & \qw\\
	 	\nghost{{c} :  } & \lstick{{c} :  } & \lstick{/_{_{1}}} \cw & \dstick{_{_{\hspace{0.0em}0}}} \cw \ar @{<=} [-1,0] & \cw & \cw\\
}
\hspace{5mm}
}
\end{center}
In this quantum circuit, the \emph{single-line wire} is a timeline representing the process from an earlier moment to a later moment, which denotes the order of the quantum gates coming to the qubit. The \emph{double-line wire} denotes the classical register. The small number below the double-line wire after ``$/$'' is the number of bits in the classical register, so the classical register has $1$ bit in this example. Since there is a small number $0$ under the down arrow from the meter icon, the $0$-th bit in the classical register $c$ stores the result of the measurement.

Here, a $Z$-gate can be represented by the Dirac notation:
$$\ket 0 \rightarrow \ket{0}; \ket 1 \rightarrow -\ket 1, $$
where the two cases of the quantum transformation are separated by ``;'' and ``$\rightarrow$'' denotes the transformation of quantum states.
For example, assume an input $\ket{q}=\frac{1}{\sqrt{2}}\ket{0}+\frac{1}{\sqrt{2}}\ket{1}$. The $Z$-gate turns $\ket 0$ into $\ket{0}$ and turns $\ket 1$ into $-\ket 1$, and thus the quantum bit will become $\ket{q}=\frac{1}{\sqrt{2}}\ket{0}-\frac{1}{\sqrt{2}}\ket{1}$.

The quantum gates can also be applied on multiple qubits. The following quantum circuit shows an example. 
\begin{center}
	\scalebox{0.8}{
\Qcircuit @C=1.0em @R=1.0em @!R { 
	 	\nghost{{q}_{1} :  } & \lstick{{q}_{1} :  } & \qw & \gate{\mathrm{X}} & \meter & \qw & \qw & \qw\\
	 	\nghost{{q}_{0} :  } & \lstick{{q}_{0} :  } & \gate{\mathrm{H}} & \ctrl{-1} & \qw & \meter & \qw & \qw\\
	 	\nghost{{c} :  } & \lstick{{c} :  } & \lstick{/_{_{2}}} \cw & \cw & \dstick{_{_{\hspace{0.0em}0}}} \cw \ar @{<=} [-2,0] & \dstick{_{_{\hspace{0.0em}1}}} \cw \ar @{<=} [-1,0] & \cw & \cw\\
 }
\hspace{5mm}
}
\end{center}
We have two qubits $q_1$ and $q_0$ in this example. We apply a \emph{Hadamard gate} ($H$-gate) \cite{hadamard1893resolution, nielsen2001quantum} on $q_0$ (denoted by the box containing H), and a \emph{controlled-$X$} gate on both $q_0$ and $q_1$ (denoted by the box containing X with one wire as input from the left (i.e., $\ket{q_1}$) and another wire as input from the bottom (i.e., $\ket{q_0}$ after the $H$-gate)). $H$-gate is a very useful quantum gate, which turns $\ket 0$ into $\frac{1}{\sqrt{2}}\ket 0 + \frac{1}{\sqrt{2}}\ket 1$ and turns $\ket 1$ into $\frac{1}{\sqrt{2}}\ket 0 - \frac{1}{\sqrt{2}}\ket 1$, which is represented by the Dirac notation:
$$\ket 0 \rightarrow \frac{1}{\sqrt{2}}\ket 0 + \frac{1}{\sqrt{2}}\ket 1; \ket 1 \rightarrow \frac{1}{\sqrt{2}}\ket 0 - \frac{1}{\sqrt{2}}\ket 1.$$ The $X$-gate (also known as a NOT-gate) is to swap the amplitudes of $\ket 0$ and $\ket 1$. We can represent it by the Dirac notation: $$\ket 0 \rightarrow \ket 1; \ket 1 \rightarrow \ket 0.$$ We use ``$X \ket{q_1}$'' to denote applying an $X$-gate on $q_1$.
The controlled-$X$ gate has two parts: $q_0$ is the \emph{control} qubit since it is in the wire from the bottom, and $q_1$ is the \emph{target} qubit since it is in the wire from the left. The controlled-$X$ gate applies an $X$-gate on the target qubit if the control qubit is $\ket 1$; otherwise, it remains the target qubit unchanged. We can represent it by the Dirac notation: $$\ket{q_1}\ket{0}\rightarrow \ket{q_1}\ket{0}; \ket{q_1}\ket{1}\rightarrow X\ket{q_1}\ket{1}.$$

We describe such a quantum transformation consisting of a series of quantum gates as a \emph{quantum oracle}. When we do not need to focus on the details in the quantum circuit but only focus on the quantum transformation, we can use a quantum oracle to express such a quantum transformation. The concept of the quantum oracle has been widely used in many studies such as \cite{grover1996fast, shor1994algorithms, wiebe2015quantum, zhang2018quantum}. Since different time complexities can be obtained on different gate sets (which represent the instruction sets in classical computers) and the general quantum computer is still at a very early stage, the query complexity is used to study quantum algorithms, which is to measure the number of queries to the quantum oracles. In the quantum algorithm area, many studies such as \cite{li2021sublinear, kapralov2020fast, montanaro2017quantum, naya2020optimal, hosoyamada2018quantum, li2019sublinear, kieferova2021quantum} assume that quantum oracles costs $O(1)$ time, and analyze the time complexity based on this assumption. We also follow this common assumption in this paper.


\section{Related Work}
\label{sec:related work}

Given a list of $N$ tuples, without any preprocessing, the classical algorithm \emph{Quick Selection}~\cite{cormen2022introduction} is used to select the $k$ tuples with the highest utilities (i.e., the top-$k$ tuples). If preprocessing is allowed, a data structure can be built to speedup the queries. \emph{$\tau$-Level Index}~\cite{zhang2022t} was proposed to answer a top-$k$ query efficiently. The preprocessing algorithm scans the hyperplane of all the possible utility functions, divides them into several partitions recursively, and builds a tree data structure, where each tree node of level $i$ contains the tuple with the $i$-th highest utility in the hyperplane. By the experiments in \cite{zhang2022t}, $\tau$-Level Index is the state-of-the-art to answer a top-$k$ query. However, the preprocessing time is long and the utility function is limited to linear function and thus cannot be a general function. Obviously, the whole execution time to answer a preference query is at least $\Omega(N)$ in a classical computer, because a linear time is needed to calculate the utilities for all the tuples. Therefore, the quantum search algorithm can be considered to improve the efficiency.

In quantum computing, Grover's algorithm \cite{grover1996fast} is described as a database search algorithm. It solves the problem of searching a record in an unstructured list, where all the $N$ records are arranged in random order. On average, the classical algorithm needs to perform $N/2$ queries to a function $f$ to tell us if the record is the answer. More formally, for each index $x$, $f(x)=1$ means the record is the answer and $f(x)=0$ means the record is not the answer. If we have a quantum circuit to calculate this function, then we can build a Grover oracle $G:\ket{x} \rightarrow (-1)^{f(x)}\ket{x}$. Taking the advantage of quantum parallelism, Grover's algorithm can find the index of the answer with $O(\sqrt{N})$ queries to the oracle. The main idea is to first ``flip'' the amplitude of the answer state and then reduce the amplitudes of the other states. One such iteration will enlarge the amplitude of the answer state and $O(\sqrt{N})$ iterations need to be performed until the probability that the qubits are measured to be the right answer is close to 1. Grover mentioned in \cite{grover2005partial} that the database is supplied in the form of a quantum oracle. In fact, the Grover oracle also contains the information of the query condition, but not only a database. This oracle can recognize the solution and Grover's algorithm is from ``recognizing the solution'' to ``knowing the solution'' \cite{nielsen2001quantum}, so Grover's algorithm has limitations in database searching. For example, the generation of the Grover oracle can be even slower than classical search \cite{seidel2021automatic}. However, many algorithms invoke Grover's algorithm as a subroutine due to its quadratic speedup compared to classical algorithms.

Dürr and Høyer's algorithm (DH) \cite{durr1996quantum} is one of the well-known applications of Grover's algorithm and has also become a subroutine of a lot of algorithms such as \cite{wiebe2015quantum}. DH algorithm is to find the index of the minimum record in an unsorted table. The main idea is to randomly choose a record and use Grover's algorithm to randomly choose a smaller record iteratively. The authors proved that with the total running time is less than $22.5\sqrt{N}+1.4\lg^2N$ and the algorithm can find the index of the minimum record with probability at least $\frac{1}{2}$. Compared with the quantum algorithm, the classical algorithm needs at least $\frac{N}{2}$ time. Another example was shown in \cite{grover1996fastm}. To estimate the median $\mu$ of $N$ items in an unordered list with a precision $\varepsilon$ such that both the number of records smaller than and greater than $\mu$ are at most $\frac{N}{2}(1+|\varepsilon|)$, any classical algorithm needs to sample at least $\Omega(\frac{1}{\varepsilon^2})$ times. An $O(\frac{1}{|\varepsilon|})$ step quantum algorithm was proposed in this paper, which takes the same phase-shifting method as Grover's algorithm and can give an estimate of $\varepsilon$ given $\mu$. Combined with binary search, this algorithm can be used to find the median. 

In \cite{grover2005partial}, Grover et al. proposed the quantum partial search algorithm. This problem has a same condition as  \cite{grover1996fast} that an unstructured list is given with a function $f$ such that $f(x)=1$ for a unique index $x$. Different from  \cite{grover1996fast}, quantum partial search only needs to find a part of the answer. Specifically, if the index $x$ is an $n$-bit address, then we only need to find the first $k$ bits of $x$. We can regard the first $k$ bits as a block, so we need to find the target block instead of the target record in the original problem. The authors concluded that the partial search is easier than the exact search. The best randomized partial search algorithm is expected to find the answer with $\frac{N}{2}(1-\frac{1}{K^2})$ queries, where $K=2^k$. This algorithm saved $\frac{N}{2K^2}$ queries compared to the original problem. In \cite{grover2005partial}, the authors proposed a better quantum algorithm that saves $\theta(\frac{1}{\sqrt{K}})$ of all queries, which is also asymptotically optimal. The main idea is to divide the original search procedure into global search and local search. We first do some iterations of global search, then do some iterations of local search in all $K$ blocks. Note that by quantum parallelism, the local search is in parallel, so this method is slightly faster than the exact search. Zhang et al. \cite{zhang2018quantum} discussed a harder version of the quantum partial search. In the new problem, we have multiple target records and also multiple target blocks. The target records are unevenly distributed in the list, which means that the target blocks have different numbers of target records. They solved this problem with the same main idea and the algorithm runs the fastest when target records are evenly distributed.

Wiebe et al. proposed a quantum nearest-neighbor algorithm in \cite{wiebe2015quantum} based on DH algorithm, which shows that the quantum algorithm may provide applications to machine learning. The task is to find the closest vector to $u$ in the training data. The training data contains $M$ vectors $v_1, v_2, \cdots, v_M$. Then, the algorithm needs two oracles $\mathcal{O}: \ket j \ket i \ket 0 \rightarrow \ket j \ket i \ket{v_{ji}}$ and $\mathcal{F}: \ket j \ket l \rightarrow \ket j \ket{f(j, l)}$, where $v_{ji}$ is the $i$-th element of the $v_j$ and $f(j, l)$ is the location of the $l$-th non-zero element in $v_j$. Using a subroutine consisting of these two quantum oracles, we can obtain $\frac{1}{\sqrt{M}}\sum_j \ket j (\sqrt{1-|v_j-u|}$ $\ket 0 + \sqrt{|v_j-u|}\ket 1)$. Since $|v_j-u|$ is the distance between $u$ and $v_j$, we have all the distances encoded in the amplitudes. However, since the amplitudes can only deliver the probabilities, we cannot simply read the amplitudes. In the next step, they use amplitude estimation \cite{brassard2002quantum} to estimate the probabilities and store them as states, so we obtain $\frac{1}{\sqrt{M}}\sum_j \ket j \ket{|v_j-u|}$. The final step is to use DH algorithm to find the minimum, and then we know the index of the closest vector.

There are also many studies on quantum machine learning. Li et al. \cite{li2021quantum} proposed a neural network method for conversational emotion recognition. This work does not use the quantum circuit and quantum bits but leverage the quantum algorithm. In particular, they use a complex number vector to encode the three kinds of data and regarded the amplitudes as the probabilities of all the emotions. They proposed a quantum-like operation to update the vector iteratively. Li et al. \cite{li2021vsql} proposed a quantum-classical framework for quantum learning. In general, a machine learning framework has data, a model, a cost function, and an optimizer. Only part of their model is running on a quantum circuit. They used a quantum circuit to extract classical features and then use a fully-connected neural network to do the classification. Jerbi et al. \cite{jerbi2021parametrized} gave a quantum framework in reinforcement learning.  They also used a quantum model and a classical optimizer. The quantum model is based on a parameter $\theta=(\phi, \lambda)$ where $\phi$ is the rotation angle and $\lambda$ is the scaling parameter, and then they used sample interactions and policy gradients to update this policy parameter.

\section{Problem Definition}
\label{sec:problem definition}
In this section, we give a formal definition of quantum preference query problems. We first give a introduction of the quantum memory 
in Section \ref{QRAM}, which is used to store the dataset in quantum computers. Then, we define the quantum preference query problems in Section \ref{qtkdef}.

\subsection{Quantum Random Access Memory}\label{QRAM}
Following \cite{kerenidis2017quantum, saeedi2019quantum}, in this paper, we assume a \emph{classical-write quantum-read QRAM}. It stores a classical record in $O(1)$ time, and accepts a superposition of addresses and returns a superposition of the corresponding records in $O(1)$ time.  It takes $O(1)$ time to read multiple records, which is the main difference from a classical memory. Specifically, we have the following Definition \ref{def:QRAM}.

\begin{definition}[Quantum Random Access Memory (QRAM)]\label{def:QRAM}
A QRAM $\mathcal{Q}$ is an ideal model which performs store and load operations in $O(1)$ time. 
\begin{itemize}
	\item A store operation: $\mathcal{Q}[addr]=value$, where $addr$ and $value$ are two bit-strings;
	\item A load operation: $$\mathcal{Q}\frac{1}{\sqrt{M}}\sum_{i=0}^{M-1}\ket{addr_i}\ket{x_i}=\frac{1}{\sqrt{M}}\sum_{i=0}^{M-1}\ket{addr_i}\ket{x_i\oplus value_i},$$ where $M$ is the number of required values, $addr_i$ and $value_i$ are the bit-strings denoting the $i$-th address and the value stored at address $addr_i$, respectively, $x_i$ is the initial value of the $i$-th destination register, and $\oplus$ denotes the XOR operation.
\end{itemize}
\end{definition}

Assume there is a QRAM $\mathcal{Q}$, then $\mathcal{Q}$ supports two kinds of operations. The first is to store classical data. An operation $\mathcal{Q}[addr]=value$ stores a classical value at the specified address in $O(1)$ time, where $addr$ and $value$ are two bit-strings denoting the address and value, respectively.
For convenience, we also write $\mathcal{Q}[addr]$ on the right-hand side of an assignment expression to denote the value stored at the address $addr$.
The second operation is to load the quantum data. In this operation, $\mathcal{Q}$ is a quantum mapping from the addresses to the values: $\mathcal{Q}\ket{addr}\ket{x}=\ket{addr}\ket{x\oplus value}$, where $\oplus$ denotes the XOR operation and $\ket x$ is the initial state of the returned qubits. Furthermore, the loading operation
$$\mathcal{Q}\frac{1}{\sqrt{M}}\sum_{i=0}^{M-1}\ket{addr_i}\ket{0}=\frac{1}{\sqrt{M}}\sum_{i=0}^{M-1}\ket{addr_i}\ket{value_i}$$
costs $O(1)$ time (note that $0 \oplus value_i = value_i$ in the above equation). In the database area, the load and store operation cost $O(1)$ time is a common basic assumption. In the quantum algorithm area, this quantum mapping can be regarded as a quantum oracle, and many studies such as \cite{li2021sublinear, kapralov2020fast, montanaro2017quantum, naya2020optimal, hosoyamada2018quantum, li2019sublinear, kieferova2021quantum} also assume the quantum oracle costs $O(1)$ time.

Many quantum algorithms \cite{kerenidis2019q, rebentrost2014quantum, wiebe2012quantum, kapoor2016quantum, wiebe2015quantum} use the concept of a QRAM. The reason is that many quantum algorithms were proposed to have a sub-linear time complexity compared to their linear classical competitors. If loading the data needs a linear time, this advantage may be lost from the theoretical perspective. For example, consider the Grover's algorithm \cite{grover1996fast} introduced in Section \ref{sec:related work}. If we want to use Grover's algorithm as a common database search algorithm, we need a quantum random access memory (QRAM) to store all the records. Assume the query is to find the position of the word ``unicorn'' in \emph{The Witcher} and the word only appears once in the book. If all the $N$ words are stored in the QRAM, then the QRAM can do the following quantum operation efficiently: $$\frac{1}{\sqrt{N}}\sum_{i=0}^{N-1}\ket{i}\ket{0}\rightarrow \frac{1}{\sqrt{N}}\sum_{i=0}^{N-1}\ket{i}\ket{w_i},$$where $N$ is the total number of words in the book and $w_i$ is the $i$-th word in the book. Then, we can build a quantum oracle that flips the amplitude of $\ket{i}$ if $w_i$ is ``unicorn''. With this oracle, we can find the position of the word in $O(\sqrt{N})$ time. However, without a QRAM, we need to use a linear time to load all the words in the book, then, theoretically, Grover's algorithm will lose the advantage of quantum parallelism.
 
Note that even with the assumption of QRAMs, the quantum preference query is still non-trivial. The reason is that the QRAM only maps the addresses to values, but to fetch all the addresses of the desired results is a non-trivial task.
 
\subsection{Quantum Preference Query}\label{qtkdef}

In the problem, a dataset $D=\{p_0, p_1,\cdots, p_{N-1}\}$ of $N$ tuples is given. For each $i\in[0, N-1]$, the tuple $p_i$ is represented as a $d$-dimensional points, i.e., $p_i=(p_i[0], p_i[1],\cdots, p_i[d-1])$, where $d$ is the number of dimensions and each dimension corresponds to an attribute of the tuple. In the quantum problem, we assume that the dataset is given in a QRAM $\mathcal{Q}$ where $\mathcal{Q}[i]=p_i$ following \cite{grover1996fast, durr1996quantum, grover2005partial}. Let $n$ be $\lceil \log_2 N \rceil$. It is easy to observe that each index from 0 to $N-1$ can be represented by $n$ qubits.

Then, a \emph{utility function} denoted by $f(\cdot)$ is input by the user. It takes a tuple (says $p$) as input and returns a real number (i.e., $f(p)$) to represent the utility of this tuple. We do not need the common assumption that the utility function is linear (i.e., $f(p)=\sum w[i]p[i]$ where $w[i]$ is the weight of $p[i]$ given by the user). Thus, for instance, $f(p)=\sqrt{\sum p[i]^2}$ can also be one of the options chosen by the user. Note that by \cite{toffoli1980reversible}, we can always build a quantum circuit with only Toffoli gates to calculate an algebraic expression, so a quantum oracle $\mathcal{F}$ can be built based on the utility function given by the user, such that $$\mathcal{F}\ket{p}\ket{x}=\ket{p}\ket{x\oplus f(p)},$$ which means we can calculate the utilities efficiently in a quantum computer. For simplicity, we assume the attributes and utilities can be represented as integers. Note that it is trivial to extend integers to real numbers in quantum computers, which is the same as classical computers. Specifically, we assume each attribute can be represented by an $n_a$-bit integer, and each utility can be represented by an $n_u$-bit integer.

Given a dataset $D$ (stored in a QRAM $\mathcal{Q}$ in a quantum computer) and a utility function $f$ (represented as a quantum oracle $\mathcal{F}$ in a quantum computer), our task is to find tuples with the highest utilities. To be more comprehensive, we consider both different types of input and different types of output. In literature, two types of input are considered:
\begin{itemize}
	\item A threshold $\theta$. Following \cite{gallego2019threshold, regenwetter1998random, liang2011utility}, the query returns all tuples with utilities higher than $\theta$.
	\item A size $k$. Following \cite{lee2009personalized, peng2015k, lian2009top, Soliman2009}, the query returns $k$ tuples with the highest utilities.
\end{itemize} 
Also, two types of output are considered:
\begin{itemize}
	\item Classical output. A list of $k$ tuples $\{p_{l_0}, p_{l_1},\cdots, p_{l_{k-1}}\}$ is returned, where $l_i$ is the index of the $i$-th tuple in the list.
	\item Quantum output. Following \cite{harrow2009quantum, rebentrost2014quantum, coppersmith2002approximate}, a quantum query returns quantum bits $\ket{\psi}$ in a superposition of the desired $k$ tuples such that $$\ket{\psi}=\frac{1}{\sqrt{k}}\sum_{i=0}^{k-1} \ket{p_{l_i}}.$$
\end{itemize} 
Therefore, four different problems are proposed, which covers all the cases both in the classical world and in the quantum world.

\begin{definition}[Classical-Output Threshold-Based Quantum Preference Query (CQPQ$_\theta$)] \label{def:ctqk}
Given a dataset $D$, a utility function $f$ and a threshold $\theta$, we would like to find out all the tuples with utility higher
than or equal to $\theta$ and return them in classical bits.
\end{definition}

\begin{definition}[Quantum-Output Threshold-Based Quantum Preference Query (QQPQ$_\theta$)] \label{def:tqk}
Given a dataset $D$, a utility function $f$ and a threshold $\theta$, we would like to find out all the tuples with utility higher
than or equal to $\theta$ and return them in quantum bits.
\end{definition}

\begin{definition}[Classical-Output Top-$k$ Quantum Preference Query (CQPQ$_k$)] \label{def:cqk}
Given a dataset $D$, a utility function $f$ and a positive integer $k$, we would like find out the $k$ tuples with the highest 
utility and return them in classical bits.
\end{definition}

\begin{definition}[Quantum-Output Top-$k$ Quantum Preference Query (QQPQ$_k$)]\label{def:qk}
Given a dataset $D$, a utility function $f$ and a positive integer $k$, we would like to find out the $k$ tuples with the highest 
utility and return them in quantum bits.
\end{definition}

Combining $\theta$ input and quantum output, we give Definition \ref{def:tqk} of Quantum-output threshold-based quantum preference query. Combining $\theta$ input and classical output, we give Definition \ref{def:ctqk} of classical-output threshold-based quantum preference query. Combining $k$ input and classical output, we give Definition \ref{def:cqk} of classical-output top-$k$ quantum preference query. Combining $k$ input and quantum output, we give Definition \ref{def:qk} of quantum-output top-$k$ quantum preference query.

\section{Algorithm}
\label{sec:algorithm}
In this section, we propose how to solve the four problems introduced in Section \ref{sec:problem definition}. In Section \ref{sec:QQPQT}, we introduce the step to obtain the answer to a QQPQ$_\theta$ problem. Based on QQPQ$_\theta$, we present how to answer a CQPQ$_\theta$ in Section \ref{sec:CQPQT}. Then, we introduce the algorithm for CQPQ$_k$ in Section \ref{sec:CQPQK}. Finally, the algorithm for QQPQ$_k$ is proposed in Section \ref{sec:QQPQK}.

\subsection{Quantum-Output Threshold-Based Quantum Preference Query}\label{sec:QQPQT}

There are two main steps to answer a QQPQ$_\theta$ problem. For simplicity, we slightly abuse notation $k$ by assuming that there are $k$ tuples with utilities higher than $\theta$. The first step is to use amplitude amplification \cite{brassard2002quantum} to enlarge the probability that we can measure the $k$ tuples. The second step is to use post-selection to obtain the $k$ tuples. In Section \ref{sec:AA}, we introduce how to use amplitude amplification with a QRAM in preference query. In Section \ref{sec:post}, we introduce how to use post-selection to obtain the answer. In Section \ref{sec:ana}, we analyze how to merge the two sub-algorithms to obtain the quantum answer to a preference query.

\subsubsection{Amplitude Amplification}\label{sec:AA}
\leavevmode\\
Consider a quantum register $\psi$ consisting of $n+n_u$ qubits with quantum state $\ket{\psi}=\frac{1}{\sqrt{N}}\sum_{i=0}^{N-1}\ket{i}\ket{f(p_i)}$. We denote a quantum state formed by the $k$ tuples whose utilities are higher than $\theta$ as $\ket{\theta^+}=\frac{1}{\sqrt{k}}\sum_{f(p_i)\geq \theta}\ket{i}\ket{f(p_i)}$ and we also denote a quantum state formed by the other $N-k$ tuples as $\ket{\theta^-}=\frac{1}{\sqrt{N-k}}\sum_{f(p_i)<\theta}\ket{i}\ket{f(p_i)}$. Then, we can obtain $\ket{\psi}=\frac{\sqrt{k}}{\sqrt{N}}\ket{\theta^+} + \frac{\sqrt{N-k}}{\sqrt{N}}\ket{\theta^-}$. The major idea of amplitude amplification is to amplify the amplitude of $\ket{\theta^+}$ and reduce the amplitude of $\ket{\theta^-}$ so that we can measure the top-$k$ tuples with a higher probability.

Starting from a set of qubits with a state $\ket{0}$, the algorithm mainly contains three steps. Step 1 is to equalize the amplitudes of all the states from $\ket{0}$ to $\ket{N-1}$. Step 2 is to flip the amplitudes of the top-$k$ tuples from $\frac{1}{\sqrt{N}}$ to $-\frac{1}{\sqrt{N}}$. Step 3 is to selectively rotate different states to enlarge the amplitude of the correct answer. 

In Step 1, we apply Hadamard gates on $n$ qubits with an initial state $\ket{0}$ to denote the index of the $N$ tuples, so we obtain $\frac{1}{\sqrt{N}}\sum_{i=0}^{N-1}\ket{i}$. By the first step, we evenly distribute the probability to all the tuples so that if we measure the qubits we will obtain a number from $0$ to $N-1$ with probability $\frac{1}{N}$.

In Step 2, we need to use the QRAM $\mathcal{Q}$ to read out all the tuples. After appending $d$ quantum registers, we obtain $$\mathcal{Q}\frac{1}{\sqrt{N}}\sum_{i=0}^{N-1}\ket{i}\ket{0}\cdots\ket{0}=\frac{1}{\sqrt{N}}\sum_{i=0}^{N-1}\ket{i}\ket{p_i[1]}\ket{p_i[2]}\cdots\ket{p_i[d]}.$$ Then, based on the utility function $f$, we can construct a quantum oracle $\mathcal{F}$ to calculate the utilities in quantum bits, such that $$\mathcal{F}\ket{p_i[1]}\ket{p_i[2]}\cdots\ket{p_i[d]}\ket{0}=\ket{p_i[1]}\ket{p_i[2]}\cdots\ket{p_i[d]}\ket{f(p_i)}.$$ Appending another quantum register, we obtain 
\begin{align*}
 &\mathcal{F}\frac{1}{\sqrt{N}}\sum_{i=0}^{N-1}\ket{i}\ket{p_i[1]}\ket{p_i[2]}\cdots\ket{p_i[d]}\ket{0}\\
 =&\frac{1}{\sqrt{N}}\sum_{i=0}^{N-1}\ket{i}\ket{p_i[1]}\ket{p_i[2]}\cdots\ket{p_i[d]}\ket{f(p_i)}.
\end{align*}
Then, we can perform an ``non-computation'' trick to reuse the middle $dn_a$ qubits by applying a QRAM read again and disentangling the middle $dn_a$ qubits, so we obtain $\frac{1}{\sqrt{N}}\sum_{i=0}^{N-1}\ket{i}\ket{f(p_i)}$. Then, based on the given threshold $\theta$, we design a quantum oracle $\mathcal{G}_\theta$, such that $$\mathcal{G}_{\theta}\ket{x} = 
\left\{
	\begin{array}{ll}
		-\ket{x} &\mbox{$x\geq \theta$} ;\\
		\ket{x} &\mbox{$x < \theta$} ,\\
	\end{array}
\right.$$ which means that $\mathcal{G}_{\theta}$ will multiply the amplitude of the state larger than $\theta$ by $-1$ and keep other amplitudes unchanged. We use $\mathcal{G}_{\theta}$ to flip the amplitude of the indexes of desired tuples so that we obtain 
\begin{align*}
&\frac{1}{\sqrt{N}}\sum_{i=0}^{N-1}\ket{i}\mathcal{G}_\theta\ket{f(p_i)}\\ =& -\frac{1}{\sqrt{N}}\sum_{f(p_i)\geq \theta}\ket{i}\ket{f(p_i)} +\frac{1}{\sqrt{N}}\sum_{f(p_i)< \theta}\ket{i}\ket{f(p_i)}\\
 =& -\frac{\sqrt{k}}{\sqrt{N}}\ket{\theta^+} +\frac{\sqrt{N-k}}{\sqrt{N}}\ket{\theta^-}.
\end{align*}

In Step 3, we need to further apply the diffusion transform \cite{grover1996fast} $H\mathcal{R}H$ to the qubits, where the quantum oracle $\mathcal{R}$ is to flip the amplitudes of all the states except $\ket{0}$. That is, $$\mathcal{R}\ket{x} = 
\left\{
	\begin{array}{ll}
		-\ket{x} &\mbox{$x \neq 0$} ;\\
		\ket{x} &\mbox{$x=0$}.\\
	\end{array}
\right.$$
By the diffusion transform, the amplitude of $\ket{\theta^+}$ is amplified and the amplitude of $\ket{\theta^-}$ is reduced, which means we will obtain the desired tuple with a higher probability if we measure the qubits. 

It is worth mentioning that Step 2 and Step 3 can be performed iteratively to further amplify the amplitude of $\ket{\theta^+}$. As shown in \cite{boyer1998tight}, after $s$ iterations of Step 2 and Step 3, we obtain $$\sin((2s+1)t)\ket{\theta^+} +\cos((2s+1)t)\ket{\theta^-},$$ where $t=\arcsin \frac{\sqrt{k}}{\sqrt{N}}$. Each iteration performs a QRAM read operation, which corresponds to one IO, so the time complexity depends on the number of iterations.

\begin{figure}[htbp]
  \centering
  \subfigure{
  \includegraphics[width=0.75\columnwidth]{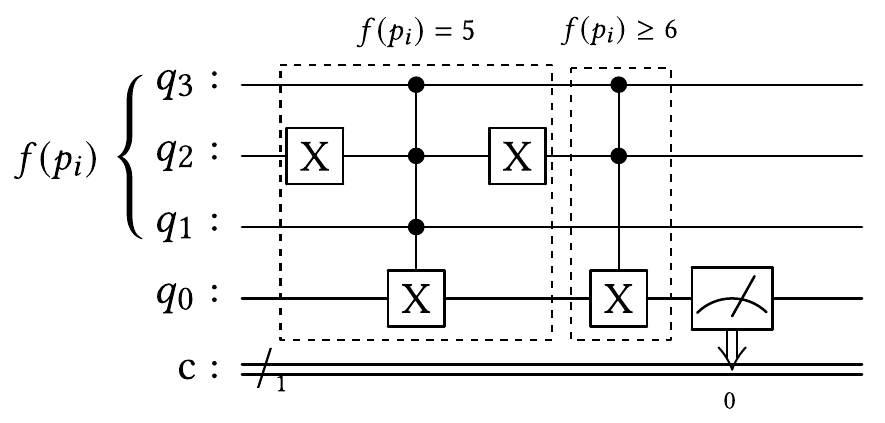}}
  \caption{Post-selection on $\mathcal{O}_5$}
  \label{fig:postexp}
\end{figure}

\subsubsection{Post-Selection}\label{sec:post}
\leavevmode\\

In this section, we propose how to obtain the answer $\ket{\theta^+}$ from the resulting quantum state of amplitude amplification $\sin((2s+1)t)\ket{\theta^+} +\cos((2s+1)t)\ket{\theta^-}$. For simplicity, we use $\alpha\ket{\theta^+} +\beta\ket{\theta^-}$ to denote this quantum state, where $\alpha$ is defined to be $\sin((2s+1)t)$ and $\beta$ is defined to be $\cos((2s+1)t)$.

The first step is to append an ``auxiliary'' qubit $\ket{0}$. Thus, we obtain $\alpha\ket{\theta^+}\ket{0} +\beta\ket{\theta^-}\ket{0}$. We expand the expression to obtain $$\frac{\alpha}{\sqrt{k}}\sum_{f(p_i)\geq \theta}\ket{i}\ket{f(p_i)}\ket{0} +\frac{\beta}{\sqrt{N-k}}\sum_{f(p_i)<\theta}\ket{i}\ket{f(p_i)}\ket{0}.$$ 

Then, we design another quantum oracle $\mathcal{O}_\theta$, which is applied on the last $n_u+1$ qubits and flips the last auxiliary qubit if $f(p_i)\geq \theta$. Specifically, this quantum oracle works as follows. $$\mathcal{O}_\theta\ket{f(p_i)}\ket{x} = 
\left\{
	\begin{array}{ll}
		\ket{x\oplus 1} &f(p_i)\geq \theta;\\
		\ket{x} &\mbox{otherwise} .\\
	\end{array}
\right.$$ It is easy to observe that $\mathcal{O}_\theta$ is similar to $\mathcal{G}_\theta$ in Section \ref{sec:AA}, since they both change the state of the desired tuples. The difference is that $\mathcal{O}_\theta$ uses controlled $X$-gates to flip the last auxiliary qubit and $\mathcal{G}_\theta$ uses controlled $Z$-gates to flip the amplitudes. By applying the quantum oracle $\mathcal{O}_\theta$ on the last $n_u+1$ qubits, we obtain $$\frac{\alpha}{\sqrt{k}}\sum_{f(p_i)\geq \theta}\ket{i}\ket{f(p_i)}\ket{1} +\frac{\beta}{\sqrt{N-k}}\sum_{f(p_i)<\theta}\ket{i}\ket{f(p_i)}\ket{0},$$ which can be represented by $\alpha\ket{\theta^+}\ket{1} +\beta\ket{\theta^-}\ket{0}$. 

In the last step, we measure the last auxiliary qubit. If we obtain $1$, then $\ket{\theta^-}$ will be collapsed in the superposition, such that we obtain $\ket{\theta^+}$, which is the desired tuples in quantum bits. Otherwise, the post-selection fails. The success rate depends on the amplitudes of $\ket{\theta^+}$ and $\ket{\theta^-}$, which is equal to $\frac{\alpha^2}{\alpha^2+\beta^2}$. 

Figure \ref{fig:postexp} shows an example of post-selection. We use $q_3q_2q_1$ to denote the binary representation of $f(p_i)$ and we use $q_0$ to denote the auxiliary qubit. Assume we have $\ket{q_3q_2q_1}=\frac{1}{2}(\ket{0}+\ket{1}+\ket{5}+\ket{7})$ and we are implementing $\mathcal{O}_5$. We add $q_0$ to the last, and then apply the quantum circuit in Figure \ref{fig:postexp}. In the first dashed-line box, we consider the case $f(p_i)=5$ such that we obtain $\ket{q_3q_2q_1}\ket{q_0}=\frac{1}{2}\ket{5}\ket{1}+\frac{1}{2}(\ket{0} + \ket{1}+\ket{7})\ket{0}$. In the second dashed-line box, we consider the case $f(p_i)\geq 6$ such that we obtain $\ket{q_3q_2q_1}\ket{q_0}=\frac{1}{2}(\ket{5}+\ket{7})\ket{1}+\frac{1}{2}(\ket{0} + \ket[1])\ket{0}$. Then, if we measure $q_0$ to be $1$ with probability $\frac{1}{2}$, then we can obtain $\ket{q_3q_2q_1}=\frac{1}{\sqrt{2}}(\ket{5}+\ket{7})$.

\begin{figure*}[htbp]
\centering
\subfigure{
  \includegraphics[width=0.75\textwidth]{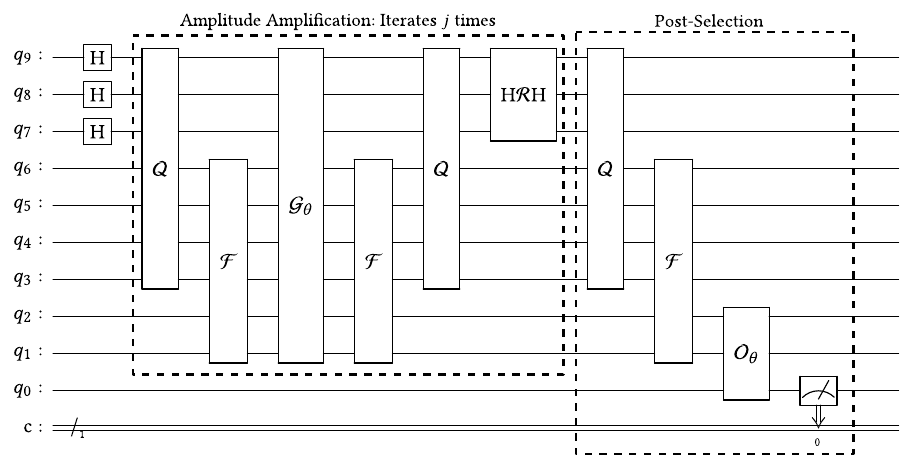}}
\caption{An illustration of the quantum circuit for QQPQ$_\theta$}\label{fig:QQPQ}
\end{figure*}

\begin{algorithm}[htbp]
\SetKwFunction{QTK}{QQPQ$_\theta$}
\SetKwFunction{Break}{break}
\caption{\protect\QTK{f, $\theta$}} \label{alg-1}
\KwIn{A utility function $f$ and a utility threshold $\theta$.}
\KwOut{$\frac{1}{\sqrt{k}}\sum_{f(p_i)\geq \theta}\ket{i}\ket{f(p_i)}$ where $k$ is the number of tuples with utilities higher than $\theta$.}
	$m$ $\leftarrow$ $1$;\\
	\While{$m \leq \sqrt{N}$}{
		$j$ $\leftarrow$ a random number in $\{1,\cdots, m\}$;\\
		$\ket{\psi}$ $\leftarrow$ $\ket{0}$;\\
		$\ket{\psi}$ $\xmapsto{H-gates}$ $\frac{1}{\sqrt{N}}\sum_{i=0}^{N-1} \ket{i}$;\\
		\tcp*[l]{Amplitude amplification}
		\For{$s$ $\leftarrow$ $1$ \KwTo $j$}{
			\tcp*[l]{$t=\arcsin{\frac{\sqrt{k}}{\sqrt{N}}}$ and $k$ is unknown yet}
			$\ket{\psi}$ $=$ $\sin((2s+1)t)\frac{1}{\sqrt{k}}\sum_{f(p_i)\geq \theta}\ket{i} +\cos((2s+1)t)\frac{1}{\sqrt{N-k}}\sum_{f(p_i)< \theta}\ket{i}$;\\
			$\ket{\psi}$ $\xmapsto{\mathcal{Q}}$ $\sin((2s+1)t)\frac{1}{\sqrt{k}}\sum_{f(p_i)\geq \theta}\ket{i}\ket{p_i} +\cos((2s+1)t)\frac{1}{\sqrt{N-k}}\sum_{f(p_i)< \theta}\ket{i}\ket{p_i}$;\\
			$\ket{\psi}$ $\xmapsto{\mathcal{F}}$ $\sin((2s+1)t)\frac{1}{\sqrt{k}}\sum_{f(p_i)\geq \theta}\ket{i}\ket{f(p_i)} +\cos((2s+1)t)\frac{1}{\sqrt{N-k}}\sum_{f(p_i)< \theta}\ket{i}\ket{f(p_i)}$;\\
			$\ket{\psi}$ $\xmapsto{\mathcal{G}_\theta}$ $-\sin((2s+1)t)\frac{1}{\sqrt{k}}\sum_{f(p_i)\geq \theta}\ket{i} +\cos((2s+1)t)\frac{1}{\sqrt{N-k}}\sum_{f(p_i)< \theta}\ket{i}$;\\
			$\ket{\psi}$ $\xmapsto{H\mathcal{R}H}$ $\sin((2s+3)t)\frac{1}{\sqrt{k}}\sum_{f(p_i)\geq \theta}\ket{i} +\cos((2s+3)t)\frac{1}{\sqrt{N-k}}\sum_{f(p_i)< \theta}\ket{i}$;\\
		}
		\tcp*[l]{Post-selection}
		$\ket{\psi}$ $=$ $\sin((2j+1)t)\frac{1}{\sqrt{k}}\sum_{f(p_i)\geq \theta}\ket{i} +\cos((2j+1)t)\frac{1}{\sqrt{N-k}}\sum_{f(p_i)< \theta}\ket{i}$;\\
		$\ket{\psi}$ $\xmapsto{\mathcal{Q}}$ $\sin((2j+1)t)\frac{1}{\sqrt{k}}\sum_{f(p_i)\geq \theta}\ket{i}\ket{p_i} +\cos((2j+1)t)\frac{1}{\sqrt{N-k}}\sum_{f(p_i)< \theta}\ket{i}\ket{p_i}$;\\
		$\ket{\psi}$ $\xmapsto{\mathcal{F}}$ $\sin((2j+1)t)\frac{1}{\sqrt{k}}\sum_{f(p_i)\geq \theta}\ket{i}\ket{f(p_i)} +\cos((2j+1)t)\frac{1}{\sqrt{N-k}}\sum_{f(p_i)< \theta}\ket{i}\ket{f(p_i)}$;\\
		Append an auxiliary qubit $\ket{\phi}=\ket{0}$;\\
		$\ket{\psi}\ket{\phi}$ $=$ $\sin((2j+1)t)\frac{1}{\sqrt{k}}\sum_{f(p_i)\geq \theta}\ket{i}\ket{f(p_i)}\ket{0} +\cos((2j+1)t)\frac{1}{\sqrt{N-k}}\sum_{f(p_i)< \theta}\ket{i}\ket{f(p_i)}\ket{0}$;\\
		$\ket{\psi}\ket{\phi}$ $\xmapsto{\mathcal{O}_\theta}$ $\sin((2j+1)t)\frac{1}{\sqrt{k}}\sum_{f(p_i)\geq \theta}\ket{i}\ket{f(p_i)}\ket{1} +\cos((2j+1)t)\frac{1}{\sqrt{N-k}}\sum_{f(p_i)< \theta}\ket{i}\ket{f(p_i)}\ket{0}$;\\
		Measure $\ket{\phi}$;\\
		\lIf{$\ket{\phi}=1$}{
			\Return{\ket{\psi}}\label{Line:returnpsi}
		}
		$m$ $\leftarrow$ $m\cdot \frac{4}{3}$;\\
	}
	\Return{NULL}\label{Line:returnnull}
\end{algorithm}

\subsubsection{Analysis} \label{sec:ana}
\leavevmode\\

Combining amplitude amplification and post-selection, we obtain the algorithm for QQPQ$_\theta$ problems. Motivated by \cite{boyer1998tight}, we use the same loop method to solve the problem that $k$ is unknown. Algorithm \ref{alg-1} shows the procedure. 

Figure \ref{fig:QQPQ} shows an illustration of the quantum circuit. $q_9$, $q_8$ and $q_7$ store the indices of the tuples. $q_6$ and $q_5$ store the first dimension of the tuple. $q_4$ and $q_3$ store the second dimension of the tuple. $q_2$ and $q_1$ store the utility of the tuple. $q_0$ is the auxiliary qubit in the post-selection. First, we use three Hadamard gates to initialize the indices. After this step, $q_9$, $q_8$ and $q_7$ all the indices from $0$ to $7$. Then, we start to iterate amplitude amplification for $j$ times. We first use $\mathcal{Q}$ to read the attributes of the tuples and store the information in $q_6, q_5, q_4$ and $q_3$. Then, we use $\mathcal{F}$ to calculate the utility and store the result in $q_2$ and $q_1$. After this step, $\mathcal{G}_\theta$ is applied to flip the amplitudes of tuples with utilities larger than $\theta$. Then, we re-apply $\mathcal{Q}$ and $\mathcal{F}$ to turn $q_6, q_5, q_4, q_3, q_2, q_1$ into the initial states so that we can reuse these qubits in the next iterations. The last step in amplitude amplification is to use $H\mathcal{R}H$ to amplify the flipped amplitudes. After $j$ iterations, we start the post-selection. After applying $\mathcal{Q}, \mathcal{F}$ and $\mathcal{O}_\theta$ in order, we measure $q_0$. If $q_0$ is $1$, then we obtain the answer in $q_9, q_8$ and $q_7$.

Note that the answer of a QQPQ$_\theta$ problem can be empty, which corresponds to the case that no tuple has utility higher than $\theta$. If the answer is empty, the algorithm returns $NULL$ as shown in Line \ref{Line:returnnull}. Otherwise, it returns the superposition of the desired tuples as shown in Line \ref{Line:returnpsi}.  Such a loop method guarantees that the false negative rate is at most $\frac{1}{4}$ \cite{boyer1998tight}, so it can be arbitrarily small if we repeat the process for a constant time. To analyze the complexity, we have the following Theorem \ref{theo1}.
\begin{theorem}\label{theo1}
	The QQPQ$_\theta$ algorithm needs $\frac{9}{2}\sqrt{\frac{N}{k}}$ IOs on average to answer a query.
\end{theorem}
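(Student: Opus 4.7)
The plan is to combine the Boyer--Brassard--H{\o}yer--Tapp (BBHT) analysis of amplitude amplification with unknown number of marked elements (which Algorithm~\ref{alg-1} instantiates) with the IO cost of the post-selection step. I first account for the per-iteration IO cost: each round of amplitude amplification (Lines~6--12) invokes the QRAM oracle $\mathcal{Q}$ once, counted as one IO, while $\mathcal{F}$, $\mathcal{G}_\theta$, and $H\mathcal{R}H$ contribute no IOs under the $O(1)$-oracle assumption of Section~\ref{sec:preliminary}. The post-selection block (Lines~13--19) adds one further QRAM invocation. Hence a single outer \textbf{while}-iteration that performs $j$ inner rounds costs $j+1$ IOs.

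Next I would analyze the outer loop. Setting $t=\arcsin\sqrt{k/N}$ and $m_0=\lceil\sqrt{N/k}\,\rceil$, a standard BBHT calculation shows that for any $m\geq m_0$ the expected success probability when $j$ is drawn uniformly from $\{1,\ldots,m\}$ is at least $1/4$; so once $m$ grows past $m_0$, the expected number of outer iterations before a successful measurement is at most $4$. For the iterations with $m<m_0$ the growth factor $4/3$ makes the total IO cost a geometric series dominated by its last term, namely $\sum_{i=0}^{L}(4/3)^i = 3((4/3)^{L+1}-1) \leq 4 m_0$. Adding the expected contribution of the successful tail, at most $4\cdot\frac{m_0+1}{2}+O(1)$ IOs from the $\mathbb{E}[j]=(m+1)/2$ per outer round, and plugging in the exact constants for $c=4/3$, yields the claimed average of $\tfrac{9}{2}\sqrt{N/k}$.

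The main obstacle will be pinning down the precise constant $9/2$ rather than a loose $O(\sqrt{N/k})$. The asymptotics follow immediately from BBHT, but the exact constant requires careful bookkeeping of (i) the worst-case success probability at the threshold $m=m_0$, (ii) the closed form of the geometric sum with ratio $4/3$, (iii) the expected $j=(m+1)/2$ per outer round, and (iv) the extra IO contributed per outer round by post-selection. I would verify the constant by computing $\lim_{N/k\to\infty}\mathbb{E}[\text{IOs}]/\sqrt{N/k}$ directly from the BBHT recurrence. I would also briefly remark on the degenerate case $k=0$ (where the algorithm returns \texttt{NULL} after $O(\sqrt{N})$ IOs) to confirm that the theorem's statement is meaningful only when $k\geq 1$.
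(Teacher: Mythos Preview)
Your proposal is correct and follows the same approach as the paper: both rest on the BBHT analysis from \cite{boyer1998tight} of amplitude amplification with unknown $k$. The paper's proof simply cites that reference for the $\tfrac{9}{2}\sqrt{N/k}$ expected-iteration bound and observes that each amplitude-amplification iteration performs one QRAM read, whereas you spell out the BBHT derivation and additionally track the post-selection IO per outer round (which the paper omits, as it is lower order).
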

\begin{proof}
	Proved by \cite{boyer1998tight}, the expected number of iterations needed is at most $\frac{9}{2}\sqrt{\frac{N}{k}}$. Since each iteration performs $1$ QRAM read, the algorithm needs $\frac{9}{2}\sqrt{\frac{N}{k}}$ IOs in total.
\end{proof}
The $O(\sqrt{\frac{N}{k}})$ complexity shows the advantage of the quantum algorithms, since a larger $k$ leads to a lower computational cost, which can never be achieved by any classical algorithms.

\subsection{Classical-Output Threshold-Based Quantum Preference Query}\label{sec:CQPQT}
Assume there are $k$ desired tuples with utilities higher than $\theta$. In this problem, the algorithm needs to return a list $\{l_0, l_1,\cdots, l_{k-1}\}$ such that $(p_{l_0}, p_{l_1},\cdots, p_{l_{k-1}})$ is the desired tuples. 

We propose the CQPQ$_\theta$ algorithm to solve this problem. Our intuition is that we utilize the superposition returned by QQPQ$_\theta$ algorithm to find the desired tuples in classical bits one by one in a number of iterations. In each $t$-th iteration, the main steps are as follows.
\begin{itemize}
	\item Step 1: We use QQPQ$_\theta$ algorithm to obtain a superposition of the desired tuples.
	\item Step 2: We measure the returned qubits. We will obtain a random desired tuple $l_t$.
	\item Step 3: We assign a special mark $dummy$ to the $l_t$-th tuple, such that it will always be assigned the lowest utility by the utility function $f$.
\end{itemize}
In the $t$-th iteration, there are $k-t$ non-dummy tuples with utilities higher than $\theta$. We use QQPQ$_\theta$ algorithm to obtain a quantum result $\frac{1}{\sqrt{k-t}}\sum_{f(p_i)\geq \theta}\ket{i}\ket{f(p_i)}$. If we measure the result, we randomly obtain one of the $k-s$ desired tuples with probability $\frac{1}{k-t}$. Then, we assign $dummy$ to it so that it will not be marked as desired tuples in the subsequent iterations. Algorithm \ref{alg-2} shows the procedure.

\begin{algorithm}[htbp]
\SetKwFunction{QCTK}{CQPQ$_\theta$}
\SetKwFunction{QTK}{QQPQ$_\theta$}
\SetKwFunction{Break}{break}
\caption{\protect\QCTK{f, $\theta$}} \label{alg-2}
\KwIn{A utility function $f$ and a utility threshold $\theta$.}
\KwOut{A list $\{l_0, l_1,\cdots, l_{k-1}\}$ such that for each $i\in [0, k-1]$, $f(p_{l_i})>\theta$, where $k$ is the number of tuples with utilities higher than $\theta$.}
	$L$ $\leftarrow$ $\emptyset$;\\
	$t$ $\leftarrow$ $0$;
	\Repeat{$\ket{\psi}=NULL$}{
		$\ket{\psi}$ $\leftarrow$ \QTK{f, $\theta$};\\
		\If{$\ket{\psi} \neq NULL$}{
			$l_t$ $\leftarrow$ the first $n$ bits of $\ket{\psi}$ after measurement;\\
			$\mathcal{Q}[l_t]$ $\leftarrow$ $dummy$;\\
			$L$ $\leftarrow$ $L\cup \{l_t\}$;\\
                $t = t + 1$; \\
		}
	}
	\Return{$L$}
\end{algorithm}

To analyze the complexity, we have the following Theorem \ref{theo2}.
\begin{theorem}\label{theo2}
	The CQPQ$_\theta$ algorithm needs $9\sqrt{Nk}$ IOs on average to answer a query.
\end{theorem}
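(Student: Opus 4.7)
The plan is to bound the total expected number of IOs by summing the expected cost of each call to QQPQ$_\theta$ across the outer loop of Algorithm~\ref{alg-2}, and then to control the resulting sum by an integral comparison.

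First, observe that after $t$ successful iterations, exactly $t$ previously desired tuples have been overwritten with $dummy$, so the QRAM contains exactly $k-t$ tuples whose utility is still above $\theta$. Applying Theorem~\ref{theo1} to this residual instance, the $t$-th call to QQPQ$_\theta$ costs an expected $\tfrac{9}{2}\sqrt{N/(k-t)}$ IOs. Summing for $t = 0, 1, \dots, k-1$ gives an expected cost of
\[
\sum_{t=0}^{k-1} \frac{9}{2}\sqrt{\frac{N}{k-t}} \;=\; \frac{9}{2}\sqrt{N}\sum_{j=1}^{k}\frac{1}{\sqrt{j}}.
\]
For the tail sum I would invoke the standard integral estimate $\sum_{j=1}^{k} j^{-1/2} \le 1 + \int_{1}^{k} x^{-1/2}\,dx = 2\sqrt{k}-1 \le 2\sqrt{k}$, which yields the desired bound $9\sqrt{Nk}$.

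The remaining step is to account for the final iteration, in which QQPQ$_\theta$ is invoked on an instance with no remaining desired tuples and must return \textit{NULL}. By the loop schedule in Algorithm~\ref{alg-1} (with $m$ growing geometrically by a factor $4/3$ until $m > \sqrt{N}$), this final call uses $O(\sqrt{N})$ IOs in the worst case, which is absorbed into the $9\sqrt{Nk}$ term whenever $k \ge 1$; if $k = 0$ the statement is vacuous (or interpreted as $O(\sqrt{N})$). I would briefly note this as a rounding remark rather than dwell on the constant.

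The main conceptual obstacle I anticipate is a subtle one: Theorem~\ref{theo1}'s expectation is taken over the internal randomness of a single QQPQ$_\theta$ call, but each call in CQPQ$_\theta$ is on an instance whose size $k-t$ is itself determined by the random outcomes of the previous calls (via the measurement in Step~2). However, because the marking procedure in Step~3 is deterministic given the measurement outcome and always removes exactly one desired tuple per successful iteration, the residual count $k-t$ is in fact a deterministic function of $t$, not a random variable. Hence the per-iteration expectations compose by linearity of expectation without any conditioning subtleties, and the sum above is valid as written.
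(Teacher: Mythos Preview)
Your proposal is correct and follows essentially the same route as the paper: sum the per-iteration costs $\tfrac{9}{2}\sqrt{N/(k-t)}$ from Theorem~\ref{theo1} and bound the sum by an integral to obtain $9\sqrt{Nk}$. The paper's proof is terser—it applies the integral comparison directly without the substitution $j=k-t$ and does not discuss the final \textit{NULL}-returning call or the conditioning issue you raise—but the argument is the same.
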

\begin{proof}
	By Theorem \ref{theo1}, in the $t$-th iteration, QQPQ$_\theta$ algorithm needs $\frac{9}{2}\sqrt{\frac{N}{k-t}}$ IOs since there are $k-t$ desired tuples. Therefore, the total number of IOs needed is 
\begin{align*}
&\sum_{t=0}^{k-1}\frac{9}{2}\sqrt{\frac{N}{k-t}}
\leq \frac{9}{2}\int_{0}^{k}\sqrt{\frac{N}{k-x}} dx\\
=&\frac{9}{2}\left.(-2\sqrt{N\cdot(k-x)})\right|^k_0
=9\sqrt{Nk}
\end{align*}
\end{proof}
In most real-world cases, we have $k<<N$. Therefore, the $O(\sqrt{Nk})$ complexity has an asymptotically quadratic improvement than classical algorithms that need at least $\Omega(N)$ time.

\subsection{Classical-Output Top-$k$ Quantum Preference Query}\label{sec:CQPQK}
In this problem, the algorithm need to return $k$ tuples with the highest utilities. The main task is to determine the $k$-th highest utility. We propose the CQPQ$_k$ algorithm to solve this problem.

In the algorithm, we maintain a min-priority queue. The main steps are as follows.
\begin{itemize}
	\item Step 1: We randomly pick $k$ tuples and insert them into the min-priority queue. We also remove these $k$ tuples from the dataset. 
	\item Step 2: We use QQPQ$_\theta$ algorithm to obtain a superposition $\ket{\psi}$ of tuples among the remaining tuples in the dataset with utilities higher than the minimum utility in the min-priority queue. If $NULL$ is returned, then the min-priority queue exactly contains the $k$ desired tuples. Otherwise, we execute Step 3.
	\item Step 3: We measure the returned qubits $\psi$ which collapse to an resulting tuple, says $p_{\psi}$. We then pop the tuple in the min-priority queue with the minimum utility and push $p_{\psi}$ into the queue. We also remove $p_{\psi}$ from the dataset. Then, we go back to Step 2.
\end{itemize}
In each iteration from Step 2 to Step 4, we update the min-priority queue with an tuple with a higher utility. When QQPQ$_\theta$ algorithm returns $NULL$ in Step 2, the min-priority queue cannot be updated any more, so the $k$ tuples in the min-priority queue are the desired $k$ tuples. Algorithm \ref{alg-3} shows the procedure.

\begin{algorithm}[htbp]
\SetKwFunction{QCTK}{CQPQ$_\theta$}
\SetKwFunction{QTK}{QQPQ$_\theta$}
\SetKwFunction{CQK}{CQPQ$_k$}
\SetKwFunction{Break}{break}
\caption{\protect\CQK{$f$, $k$}} \label{alg-3}
\KwIn{A utility function $f$ and a positive integer $k$.}
\KwOut{A list $\{l_0, l_1,\cdots, l_{k-1}\}$ such that $\{p_{l_0}, p_{l_1},p_{l_{k-1}}\}$ are the $k$ tuples with the highest utilities.}
	$Q$ $\leftarrow$ $\emptyset$;\tcp*[r]{Initialize the min-priority queue}
        $D'$ $\leftarrow$ $\{0, 1,\cdots, N-1\}$; \\
	\For{$i$ $\leftarrow$ $0$ \KwTo $k-1$}{
		$item$ $\leftarrow$ a random number in $D'$;\\
            Remove $item$ from $D'$; \\
		Push $item$ into $Q$;\\
	}
	\Repeat{$\ket{\psi}=NULL$}{
		$item$ $\leftarrow$ The tuple in $Q$ with the minimum utility;\\
		$\theta$ $\leftarrow$ $f(p_{item})$;\\
		$\ket{\psi}$ $\leftarrow$ \QTK{f, $\theta$} with dataset $D'$;\\
		\If{$\ket{\psi} \neq NULL$}{
			$nextitem$ $\leftarrow$ the first $n$ bits of $\ket{\psi}$ after measurement;\\
			Pop the tuple with the minimum utility from $Q$;\\
			Push $nextitem$ into $Q$;\\
		}
	}
	\Return{$Q$}
\end{algorithm}
To analyze the complexity, we need to first discuss the probability that each tuple will be inserted into the priority queue. We use $P(i)$ to denote the probability that the tuple with the $i$-th highest utility will be inserted into the min-priority queue. Then, the following Lemma \ref{lemma1} can be obtained.
\begin{lemma}\label{lemma1}
	The probability $P(i)$ depends on $i$ such that 
	$$P(i)=
\left\{
	\begin{array}{ll}
		1 &i\leq k ;\\
		k/i &\mbox{otherwise} .\\
	\end{array}
\right.$$
\end{lemma}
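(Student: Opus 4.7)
The plan is to first reformulate the algorithm as an equivalent offline process driven by a single uniformly random permutation, and then reduce the question to a one-line symmetry argument.

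First I would argue that, for the purpose of determining which tuples ever get pushed into $Q$, the algorithm is distributionally equivalent to the following streaming procedure: draw a uniformly random permutation $\pi$ of $\{0,1,\dots,N-1\}$; place the first $k$ tuples of $\pi$ into $Q$; then, for each subsequent tuple in $\pi$-order, insert it (replacing the current minimum) if its utility exceeds the current minimum of $Q$, otherwise discard it. The justification is that the initial $k$ picks are chosen uniformly at random from $D$, and on every subsequent iteration the call to QQPQ$_\theta$ followed by measurement returns a uniformly random element of the qualifying set $\{p_j\in D':f(p_j)\ge \theta\}$, which is exactly the distribution of the next qualifying element encountered in a uniformly random scan.

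Second, I would invoke the standard streaming fact that, under this equivalent procedure, after any number of tuples have been processed, $Q$ consists of exactly the top-$k$ of the processed tuples (by utility). Hence the tuple with the $i$-th largest utility, call it $p_{(i)}$, is ever inserted into $Q$ if and only if at the moment it is processed, fewer than $k$ tuples with strictly larger utility have already been processed. Equivalently, among the $i$ tuples $p_{(1)},\dots,p_{(i)}$, the position of $p_{(i)}$ in $\pi$ must be at most the $k$-th smallest.

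Third, I would finish with symmetry. Restricted to any fixed set of $i$ elements, the relative order induced by a uniformly random permutation is itself uniformly random, so the rank of $p_{(i)}$ among $\{p_{(1)},\dots,p_{(i)}\}$ is uniform on $\{1,\dots,i\}$. The event ``rank $\le k$'' therefore has probability $\min(k,i)/i$, which equals $1$ when $i\le k$ and $k/i$ when $i>k$, matching the two cases of the claimed $P(i)$.

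The main obstacle is the equivalence in the first step: one must rule out the worry that the algorithm's bias toward sampling only qualifying tuples (while the permutation view also ``sees'' non-qualifying tuples) somehow skews the distribution of inserted tuples. I would address this by a short induction on the iteration index, showing that conditional on the current $Q$ and $D'$, the next inserted tuple is uniform on the current qualifying subset of $D'$ in both views, which is all that is needed since non-qualifying tuples never enter $Q$ in either model.
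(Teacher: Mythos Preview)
Your proposal is correct and takes a genuinely different route from the paper. The paper argues directly via a backward recurrence: it decomposes the event ``$p_{(i)}$ is ever inserted'' into ``$p_{(i)}$ is among the initial $k$ random picks'' (probability $k/N$) and, for each $j>i$, ``$p_{(i)}$ is the element that replaces $p_{(j)}$ when $p_{(j)}$ is popped as the current minimum'' (probability $\frac{1}{j-1}P(j)$, justified by the observation that the $j-1$ higher-utility tuples are exchangeable from the viewpoint of $p_{(j)}$). Assuming $P(j)=k/j$ for all $j>i$, the sum $\sum_{j>i}\frac{1}{j-1}\cdot\frac{k}{j}$ telescopes to $\frac{k}{i}-\frac{k}{N}$, closing the induction. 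Your argument instead couples the algorithm to a reservoir-style scan driven by a single uniform permutation and finishes with the one-line rank-symmetry computation. What you gain is a cleaner endgame and an explicit connection to the classical streaming top-$k$ analysis; what you pay is the equivalence step you correctly flag, which hinges on the observation that passed-over (non-qualifying) elements remain non-qualifying forever since $\theta$ is nondecreasing, so the qualifying-and-unprocessed set in the permutation view coincides with the qualifying subset of $D'$ in the algorithm. The paper's approach avoids that reduction but leaves the ``each of the $j-1$ higher tuples is the replacer with probability $1/(j-1)$'' step rather implicit. Both proofs ultimately rest on the same exchangeability of higher-utility tuples, just organized differently.
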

\begin{proof}
If $i\leq k$, then the tuple with the $i$-th highest utility must be inserted into $Q$, since it is one of the desired tuples. Therefore, $P(i)=1$.

Then, we discuss the case $i>k$. If $i=N$, we have $P(i)=k/N$, since we can only insert the tuple in Step 1. Otherwise, the probability contains two parts. The first part is $\frac{k}{N}$, since each tuple will be inserted into $Q$ with probability $\frac{k}{N}$ in Step 1. The second part is the probability of insertion in Step 3. We can observe that from the perspective of the tuple with the $j$-th highest utility, each tuple with a higher utility is identical in QQPQ$_\theta$ algorithm, so they will be inserted with the same probability $\frac{1}{j-1}$. If $i<N$, assume it holds for $i+1$ to $N$, then 
\begin{align*}
P(i)&=\frac{k}{N}+\sum_{j=i+1}^N \frac{1}{j-1}P(j)
=\frac{k}{N}+\sum_{j=i+1}^N \frac{1}{j-1}\cdot \frac{k}{j}\\
&=\frac{k}{N}+ \frac{1}{N-1}\cdot \frac{k}{N}+\cdots+\frac{1}{i}\cdot \frac{k}{i+1}\\
&=\frac{k}{N}+ \frac{k}{N-1}- \frac{k}{N}+\cdots+\frac{k}{i}- \frac{k}{i+1}\\
&=\frac{k}{i}
\end{align*}
\end{proof}
By Lemma \ref{lemma1}, we know the probability that each tuple appears in the min-priority queue, which leads to a QQPQ$_\theta$ query and a min-priority queue insertion. To analyze the complexity, we have the following Theorem \ref{theo3}.
\begin{theorem}\label{theo3}
	The CQPQ$_k$ algorithm needs $\frac{9\pi}{2}\sqrt{Nk}+ k\log_2 k \ln N$ IOs on average to answer a query.
\end{theorem}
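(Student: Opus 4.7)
The plan is to split the expected IO cost of Algorithm~\ref{alg-3} into two independent contributions: the oracle queries issued by QQPQ$_\theta$ inside the loop, and the classical bookkeeping on the min-priority queue $Q$. Part (b) is immediate from Lemma~\ref{lemma1}: the expected number of Step~3 insertions equals $\sum_{i=k+1}^{N} P(i) = k(H_N-H_k) \le k\ln N$, so, adding the $k$ initial insertions of Step~1 and charging $O(\log k)$ to each heap operation, the total classical work is $O(k\log_2 k\ln N)$, matching the additive term of the theorem.

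For part (a), the pivotal structural fact I will establish is: when a successful QQPQ$_\theta$ call inserts the tuple $p_{(i)}$ of utility rank $i$ in $D$, the current threshold $\theta$ must have rank $R>i$ in $D$. This is because every tuple returned by QQPQ$_\theta$ from $D'$ has utility strictly greater than $\theta$ (assuming distinct utilities), so its rank $i$ must be strictly less than that of $\theta$. Consequently the number of tuples in $D'$ above $\theta$ satisfies $k' = R-k \ge i-k$, and Theorem~\ref{theo1} bounds the expected IO of this call by $\frac{9}{2}\sqrt{N/(i-k)}$ when $i>k$. The single terminating NULL call costs $O(\sqrt{N})$ from the geometric schedule of the outer loop in Algorithm~\ref{alg-1} and is absorbed into the main term.

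Summing these per-insertion costs using Lemma~\ref{lemma1} and linearity of expectation yields
\[
E[\text{cost of (a)}] \le \sum_{i=k+1}^{N}\frac{k}{i}\cdot\frac{9}{2}\sqrt{\frac{N}{i-k}} = \frac{9k\sqrt{N}}{2}\sum_{i=k+1}^{N}\frac{1}{i\sqrt{i-k}}.
\]
The final step is an integral approximation: with the substitutions $j=i-k$ followed by $u=\sqrt{j}$, the tail sum is dominated by $2\int_{0}^{\infty} du/(k+u^2) = \pi/\sqrt{k}$, yielding exactly the leading term $\frac{9\pi}{2}\sqrt{Nk}$.

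The main obstacle is the careful handling of top-$k$ tuples ($i\le k$) that may also enter $Q$ via Step~3, for which the simple per-call bound $\frac{9}{2}\sqrt{N/(i-k)}$ no longer applies. One must argue that such insertions occur early in the execution when the threshold rank $R$ is still large and $K'$ is correspondingly big, so these calls are individually cheap and their aggregate contribution is subsumed by the main $\frac{9\pi}{2}\sqrt{Nk}$ term. Once this tighter accounting is in place, the integral calculation from the $i>k$ contribution delivers the characteristic $\pi/2$ factor inherited from the arctan integral, completing the bound.
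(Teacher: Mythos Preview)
Your decomposition, the resulting sum, and the arctan integral are exactly what the paper does. The one substantive difference is your charging scheme, and it is precisely what creates the ``main obstacle'' you flag.

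You charge each successful QQPQ$_\theta$ call to the tuple it \emph{returns} (rank $i$), and then bound the number of good tuples by $k' = R-k \ge i-k$. This forces you to treat the case $i \le k$ separately, where the bound $\tfrac{9}{2}\sqrt{N/(i-k)}$ is meaningless. The paper instead (implicitly) charges each non-terminating call to its \emph{threshold} tuple, i.e.\ to the element of $Q$ that is about to be popped. That threshold always has rank $R>k$: if the minimum of $Q$ had rank $\le k$ then all $k$ elements of $Q$ would have rank $\le k$, so $Q$ would already be the final answer and the call would be the terminating NULL call. With this charging you get $k' = R-k$ \emph{exactly}, the call costs at most $\tfrac{9}{2}\sqrt{N/(R-k)}$, and for $R>k$ the event ``rank-$R$ tuple is ever the threshold'' coincides with ``rank-$R$ tuple is ever inserted into $Q$'', which has probability $P(R)=k/R$ by Lemma~\ref{lemma1}. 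This gives the identical sum
\[
\sum_{R=k+1}^{N}\frac{k}{R}\cdot\frac{9}{2}\sqrt{\frac{N}{R-k}}
\]
with no leftover terms whatsoever.

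So the tighter accounting you anticipate for top-$k$ insertions ``occurring early'' is unnecessary; the obstacle is purely an artifact of charging to the wrong endpoint of the pop/push pair. Once you switch, your integral computation via $j=i-k$, $u=\sqrt{j}$ and $2\int_0^\infty du/(k+u^2)=\pi/\sqrt{k}$ is literally the paper's evaluation $\bigl[\tfrac{2}{\sqrt{k}}\arctan\sqrt{(x-k)/k}\,\bigr]_k^N \le \pi/\sqrt{k}$ written in different variables, and the heap term matches as well.
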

\begin{proof}
Since the desired $k$ tuples will not trigger another QQPQ$_\theta$ query and a min-priority queue insertion, we only need to calculate the IO cost needed by the other $N-k$ tuples, so we obtain
\begin{align*}
&\sum_{i=k+1}^N P(i)\cdot \frac{9}{2}\sqrt{\frac{N}{i-k}}+\sum_{i=k+1}^N P(i)\cdot \log_2 k\\
=&\frac{9k\sqrt{N}}{2}\sum_{i=k+1}^N \frac{1}{i}\cdot \sqrt{\frac{1}{i-k}}+k\log_2 k\cdot\sum_{i=k+1}^N \frac{1}{i}\\
\leq & \frac{9k\sqrt{N}}{2}\int_k^N \frac{1}{x}\cdot \sqrt{\frac{1}{x-k}}dx + k\log_2 k \ln N\\
=&\frac{9k\sqrt{N}}{2}\left. (\frac{2}{\sqrt{k}}\arctan (\sqrt{\frac{x-k}{k}})) \right|_k^N + k\log_2 k \ln N\\
\leq & \frac{9\pi}{2}\sqrt{Nk}+ k\log_2 k \ln N.\\
\end{align*}
\end{proof}
By Theorem \ref{theo3}, the complexity of our proposed CQPQ$_k$ algorithm is $O(\sqrt{Nk}+ k\log_2 k \ln N)$. 

\subsection{Quantum-Output Top-$k$ Quantum Preference Query}\label{sec:QQPQK}
In this problem, the algorithm needs to return a superposition of the desired $k$ tuples. However, compared to CQPQ$_k$ problem, there is little hope to obtain further speedup. The reason is that we can only use a classical method as shown in Section \ref{sec:CQPQK} to determine the $k$-th highest utility, due to the impossibility of comparing quantum states \cite{arul2001impossibility}. Therefore, to solve QQPQ$_k$ problem, we propose Algorithm \ref{alg-4} based on QQPQ$_\theta$ algorithm and CQPQ$_k$ algorithm. Obviously, the complexity of QQPQ$_k$ algorithm is $O(\sqrt{Nk})$.
\begin{algorithm}[htbp]
\SetKwFunction{QCTK}{CQPQ$_\theta$}
\SetKwFunction{QTK}{QQPQ$_\theta$}
\SetKwFunction{CQK}{CQPQ$_k$}
\SetKwFunction{QK}{QQPQ$_k$}
\SetKwFunction{Break}{break}
\caption{\protect\QK{$f$, $k$}} \label{alg-4}
\KwIn{A utility function $f$ and a positive integer $k$.}
\KwOut{$\frac{1}{\sqrt{k}}\sum_{f(p_i)\geq \theta}\ket{i}\ket{f(p_i)}$.}
	$L$ $\leftarrow$ \CQK{$f$, $k$};\\
	$\theta$ $\leftarrow$ $\max_{l \in L}f(p_l)$;\\
	\Return{\QTK{$f$, $\theta$}}
\end{algorithm}

\section{Experiment}
\label{sec:experiments}
In this section, we show our experimental results on our quantum preference queries. The study of the real-world quantum supremacy \cite{arute2019quantum, boixo2018characterizing, terhal2018quantum}, which is to confirm that a quantum computer can do tasks faster than classical computers, is still a developing topic in the quantum area. We do not aim at verifying quantum supremacy in this paper, but we believe it will be verified in a future quantum computer.

To implement the quantum algorithms and conduct scalability tests, we mainly use C++ to perform the quantum simulations, since existing quantum simulators (e.g., Qiskit \cite{wille2019ibm} and Cirq \cite{cirq}) cannot simulate QRAM effectively. Nevertheless, for a comprehensive comparison, we also implement our algorithms by applying an existing widely-used library \cite{qramLibrary} for QRAM simulation (which is implemented with the Q\# language) and make comparison with baselines implemented with Q\# as well. All algorithms are implemented in a classical machine with 3.6GHz CPU and 32GB memory.

\smallskip\noindent\textbf{Datasets.} We use synthetic and real datasets that were commonly used in existing studies \cite{borzsony2001skyline, tang2019creating, tang2021m}. The synthetic datasets include \emph{anti-correlated} (ANTI), \emph{correlated} (CORR), \emph{independent} (INDE) which have different data distributions. The real datasets include \emph{HOTEL} \cite{hotelDataset}, \emph{HOUSE} \cite{houseDataset}, \emph{NBA} \cite{NBADataset}. HOTEL includes 419k tuples with 4 attributes (i.e., price and numbers of stars, rooms and facilities). HOUSE includes 315k tuples with 6 attributions (i.e., gas, electricity, water, heating, insurance and property tax). NBA includes 21.9k tuples with 8 attributes (i.e., games, rebounds, assists, steals, blocks, turnovers, personal fouls and points).

\smallskip\noindent\textbf{Measurement.} In this paper, we mainly evaluate the \emph{number of memory accesses}, which corresponds to the number of IOs in traditional searches. In the quantum algorithms, a QRAM read operation is counted as 1 IO. In the classical algorithms, a page access is counted as 1 IO. IOs cannot be regarded as the real execution time, since we have no any information about the real implementation of a physical QRAM, but they can reflect the potential of the algorithms. Furthermore, we also compare the execution time of algorithms implemented with Q\# for additional verification. Note that even with existing QRAM simulation in Q\#, the execution time may still be different from that on real quantum computers.

\smallskip\noindent\textbf{Algorithms.} 
Since there is no classical quantum algorithm returning quantum output, we focused on the classical-output algorithms. We evaluate our algorithms for the top-$k$ and threshold-based queries (denoted by QPQ$_k$ and QPQ$_\theta$, respectively) with existing classical algorithms \emph{Quick Selection} \cite{cormen2022introduction}, \emph{$\tau$-Level Index} \cite{zhang2022t} and \emph{Linear Scan}. Specifically, we compare QPQ$_k$ with Quick Selection and $\tau$-Level Index, and compare QPQ$_\theta$ with Linear Scan (because Quick Selection and $\tau$-Level Index can only handle the top-$k$ queries). Note that $\tau$-Level Index needs preprocessing which costs up to billions of IOs. However, we do not count the IOs in its preprocessing in comparison and only focused on the IO costs for queries.

\smallskip\noindent\textbf{Parameter Setting.} We evaluated the performance of algorithms by varying several parameters: (1) parameter $k$, used in QPQ$_k$; (2) parameter $\theta$, used in QPQ$_\theta$; (3) the number of dimensions $d$; (4) the number of tuples in the dataset $N$; (5) the category (ANTI, CORR and INDE), which reflects the data distribution in the synthetic datasets. Following \cite{zhang2022t}, $n=500$K, $d=4$, $k=10$ by default. For each experimental setting, we randomly generate 100 queries with different utility functions and report the average measurement.

In the following, we show our experimental results.

\begin{figure*}[tbp]
\vspace{-2mm}
  \centering
  \subfigure{
  \includegraphics[width=0.7\textwidth]{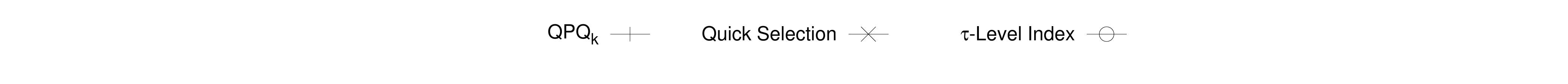}}
  \\
  \vspace{-5mm}
  \setcounter{subfigure}{0}
  \subfigure[ANTI]{
    \label{fig:1-1}
    \includegraphics[width=0.23\textwidth]{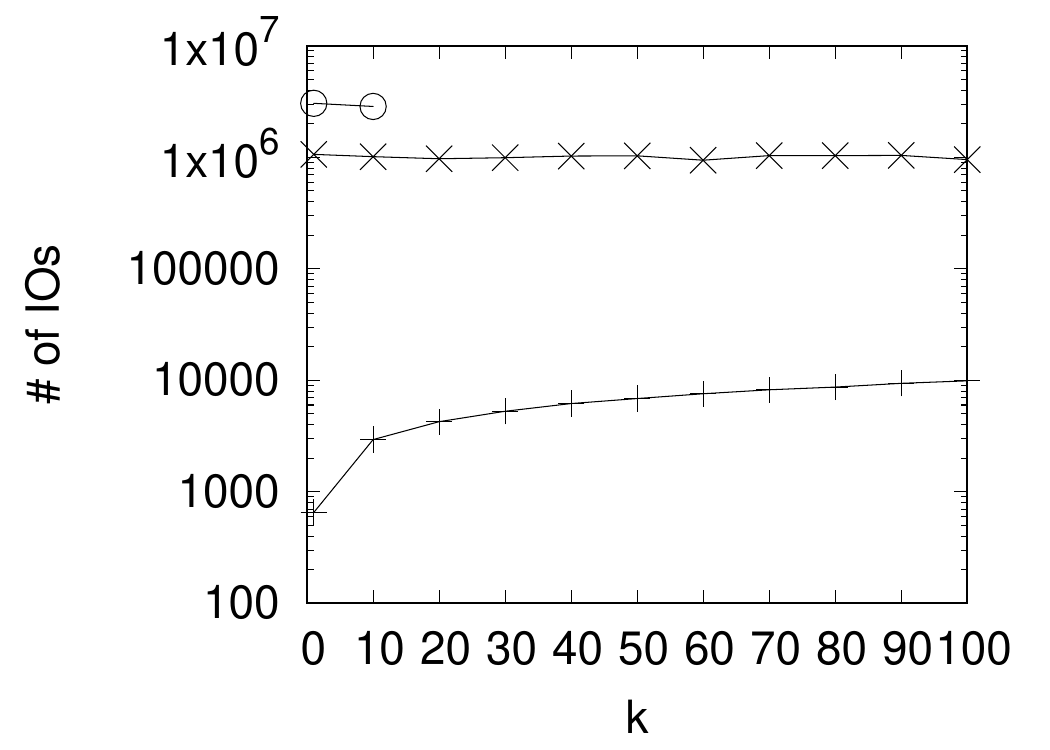}}
    \subfigure[HOTEL]{
    \label{fig:1-2}
    \includegraphics[width=0.23\textwidth]{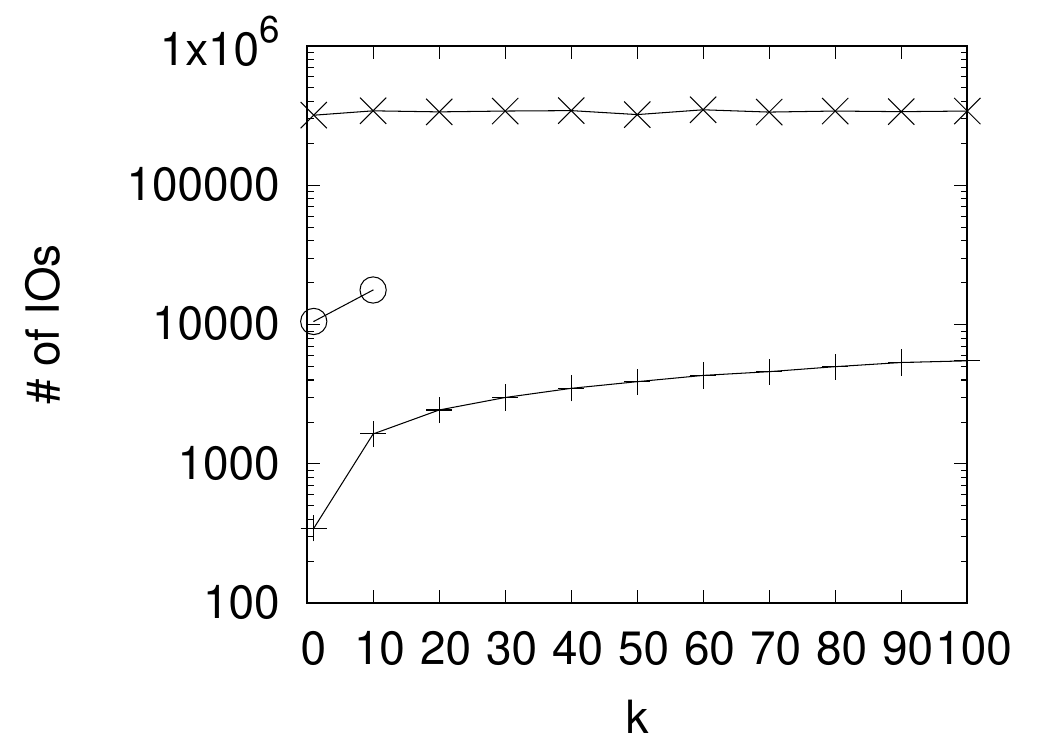}}
  \subfigure[HOUSE]{
    \label{fig:1-3}
    \includegraphics[width=0.23\textwidth]{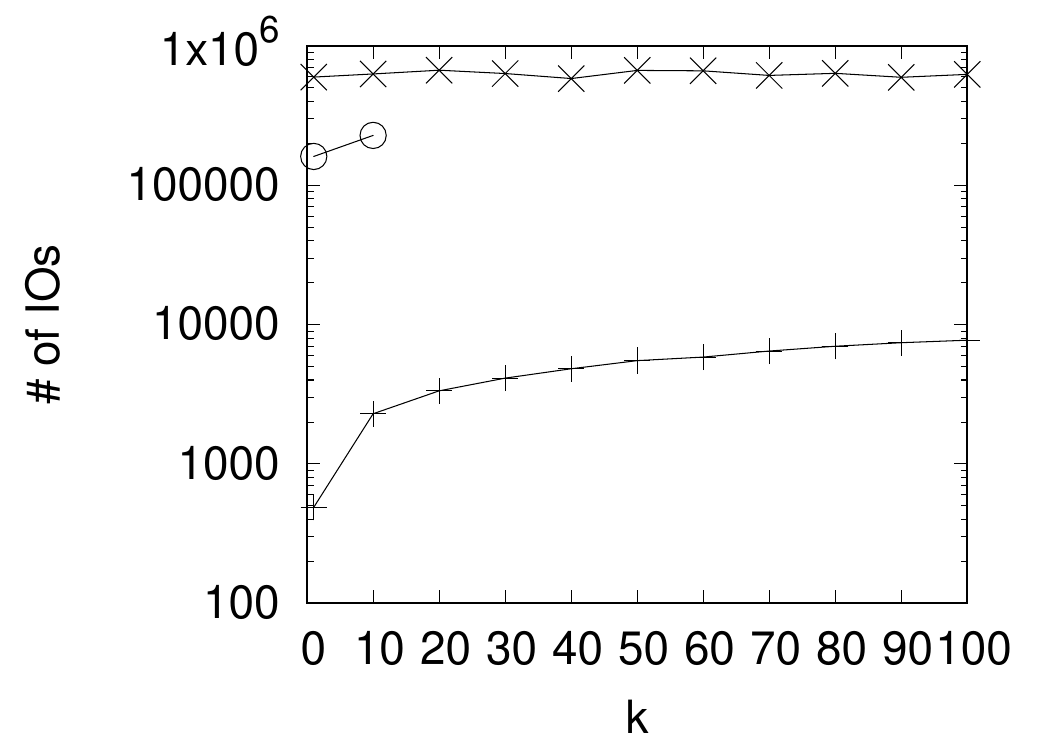}}
    \subfigure[NBA]{
    \label{fig:1-4}
    \includegraphics[width=0.23\textwidth]{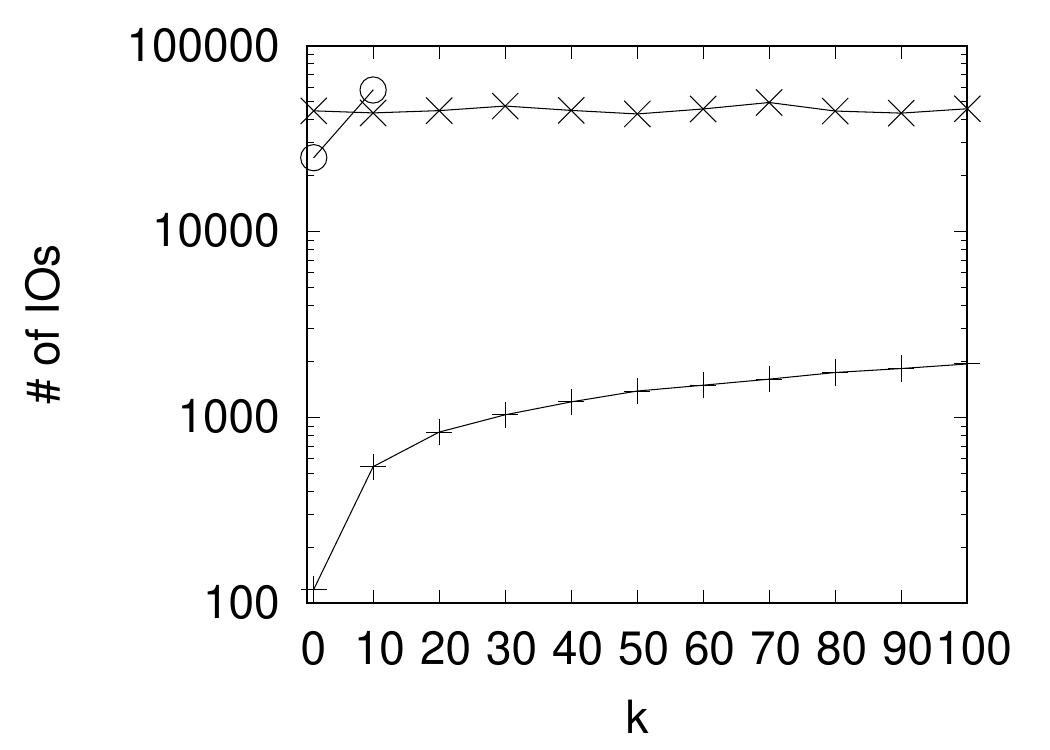}}
  \caption{The Effect of $k$ for the CQPQ$_k$ Queries}
  \label{fig:1}
\end{figure*}

\begin{figure*}[tbp]
\vspace{-4mm}
  \centering
  \subfigure{
  \includegraphics[width=0.8\textwidth]{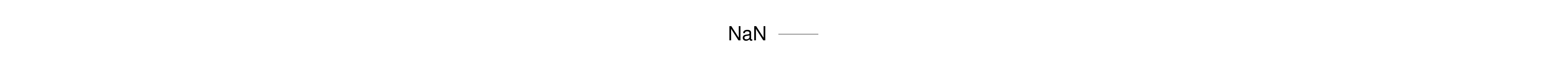}}
  \\
  \vspace{-5mm}
  \setcounter{subfigure}{0}
  \subfigure[Vary $d$]{
    \label{fig:2-1}
    \includegraphics[width=0.25\textwidth]{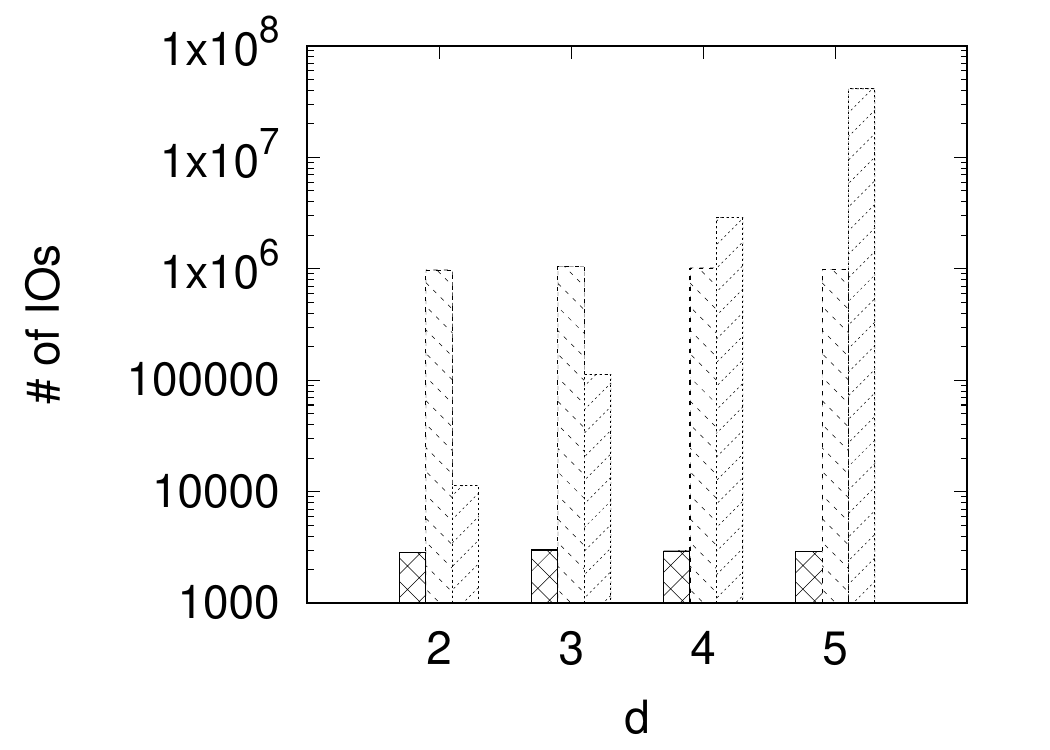}}
    \subfigure[Vary $N$]{
    \label{fig:2-2}
    \includegraphics[width=0.25\textwidth]{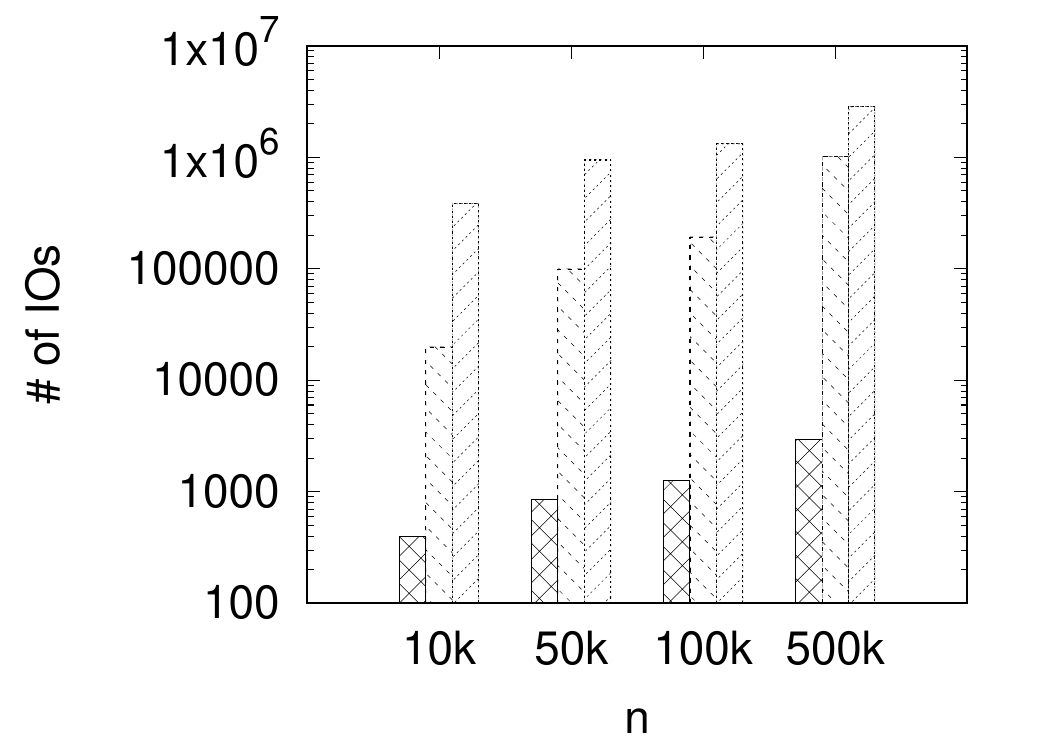}}
  \subfigure[Vary dataset]{
    \label{fig:2-3}
    \includegraphics[width=0.25\textwidth]{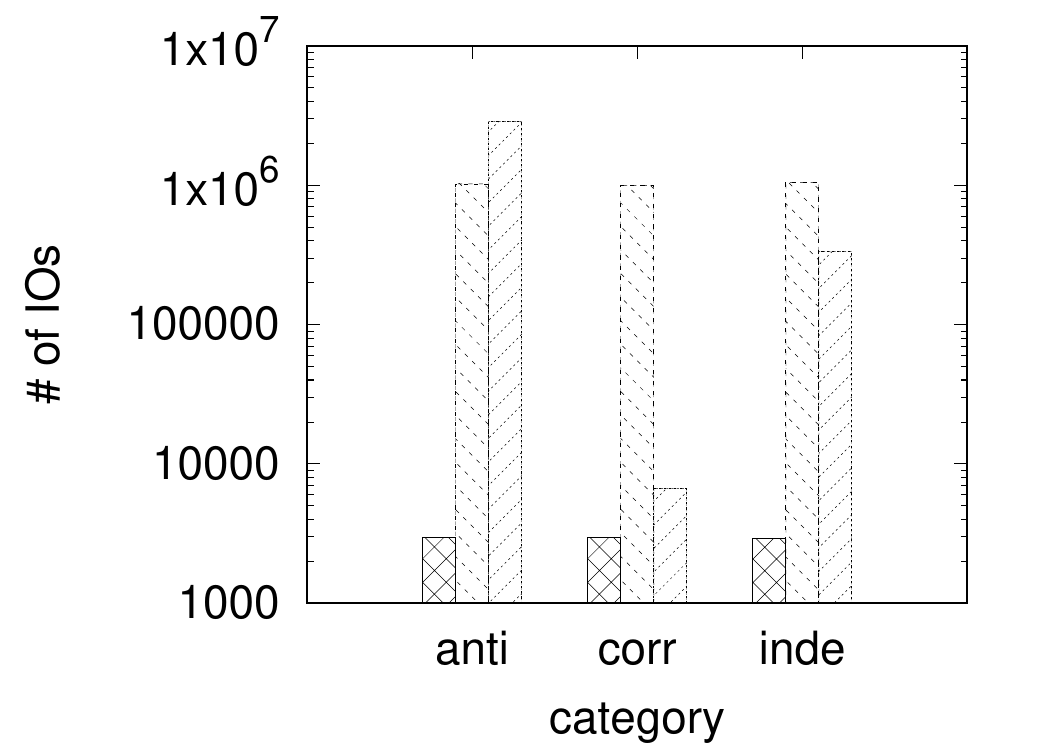}}
  \caption{The Effect of $d$, $N$ and dataset category for the CQPQ$_k$ Queries}
  \label{fig:2}
\end{figure*}

\begin{figure*}[tbp]
\vspace{-4mm}
  \centering
  \subfigure{
  \includegraphics[width=0.7\textwidth]{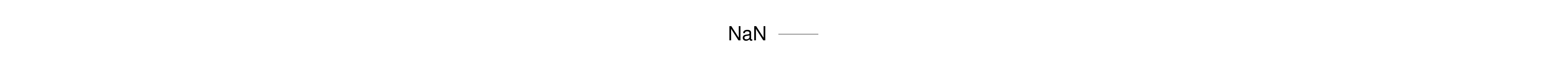}}
  \\
  \vspace{-5mm}
  \setcounter{subfigure}{0}
  \subfigure[ANTI]{
    \label{fig:3-1}
    \includegraphics[width=0.23\textwidth]{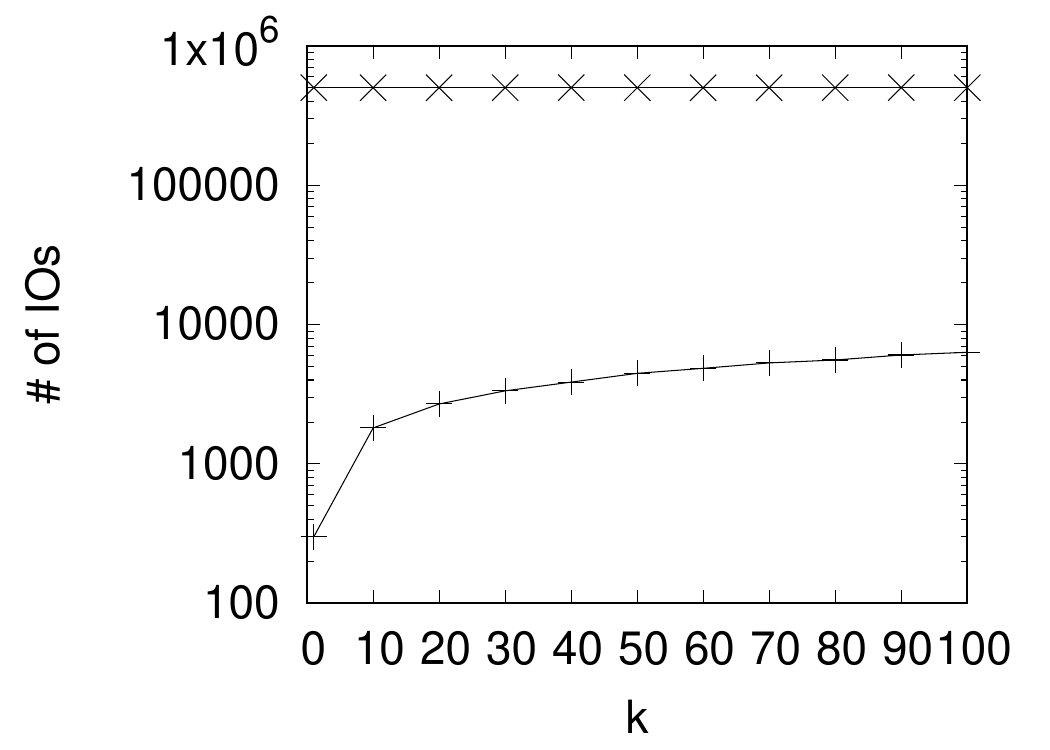}}
    \subfigure[HOTEL]{
    \label{fig:3-2}
    \includegraphics[width=0.23\textwidth]{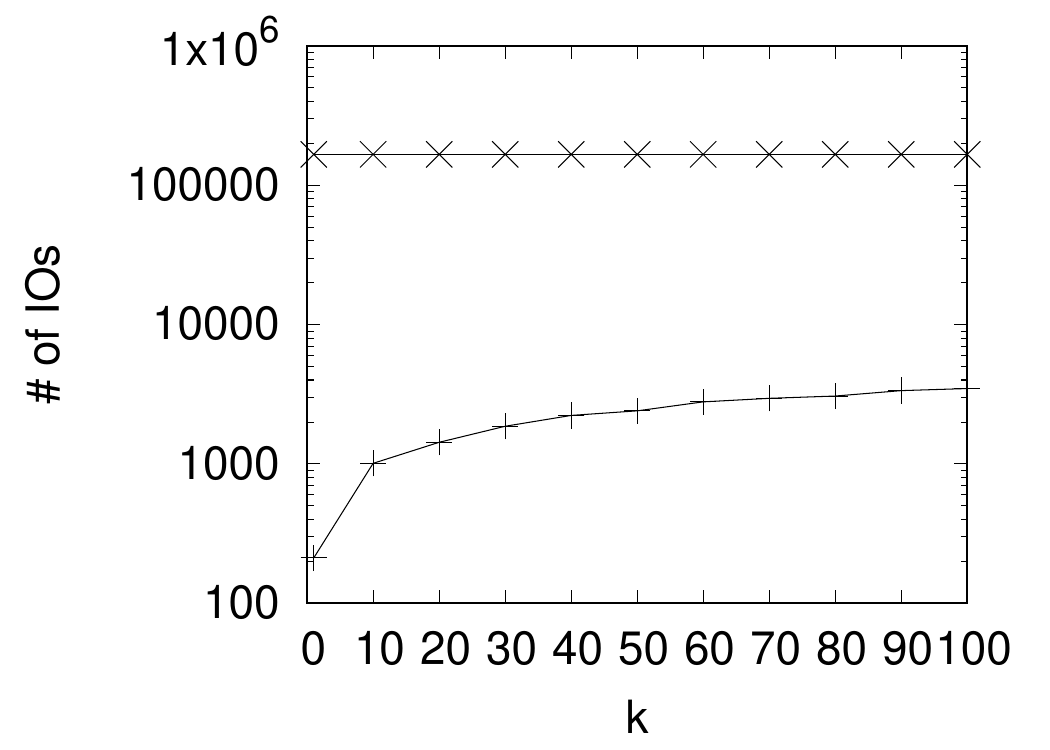}}
  \subfigure[HOUSE]{
    \label{fig:3-3}
    \includegraphics[width=0.23\textwidth]{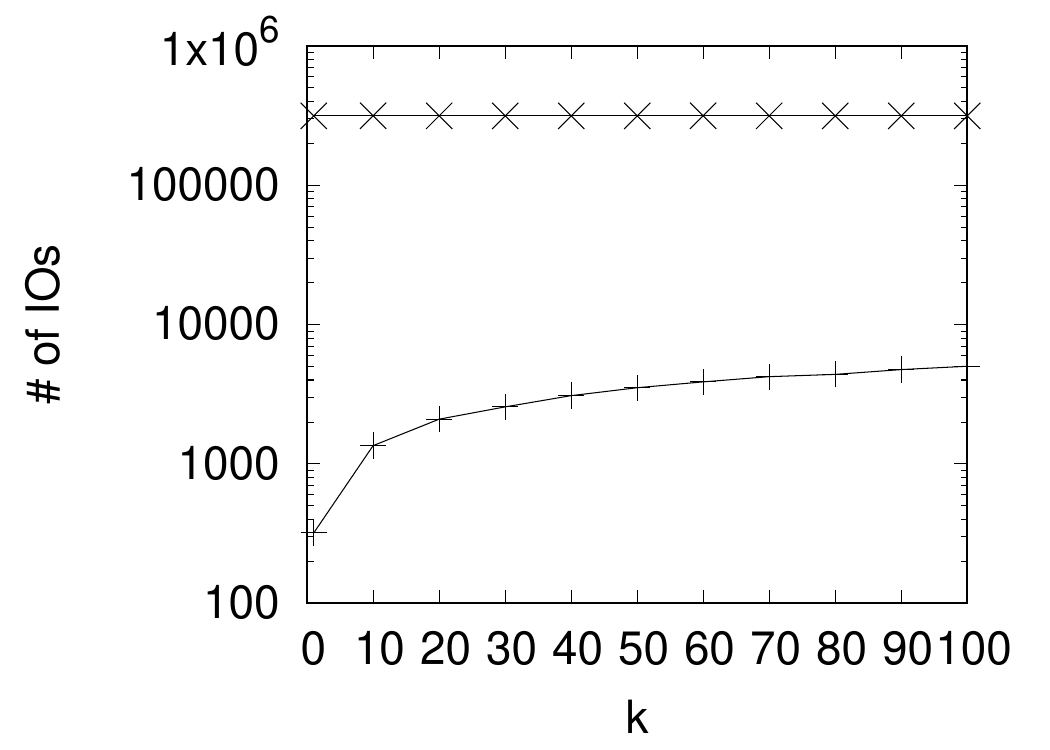}}
    \subfigure[NBA]{
    \label{fig:3-4}
    \includegraphics[width=0.23\textwidth]{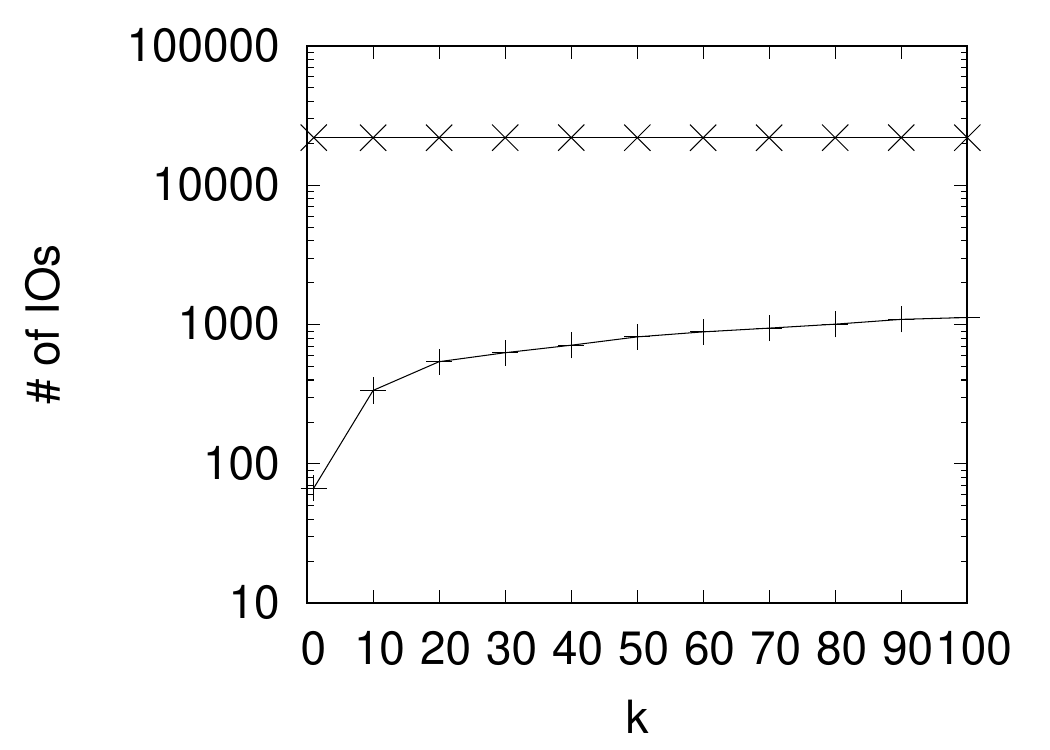}}
  \caption{The Effect of $k$ for the CQPQ$_{\theta}$ Queries}
  \label{fig:3}
\end{figure*}

\begin{figure*}[tbp]
\vspace{-4mm}
  \centering
  \subfigure{
  \includegraphics[width=0.8\textwidth]{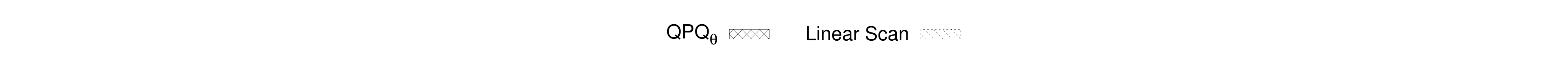}}
  \\
  \vspace{-5mm}
  \setcounter{subfigure}{0}
  \subfigure[Vary $d$]{
    \label{fig:4-1}
    \includegraphics[width=0.25\textwidth]{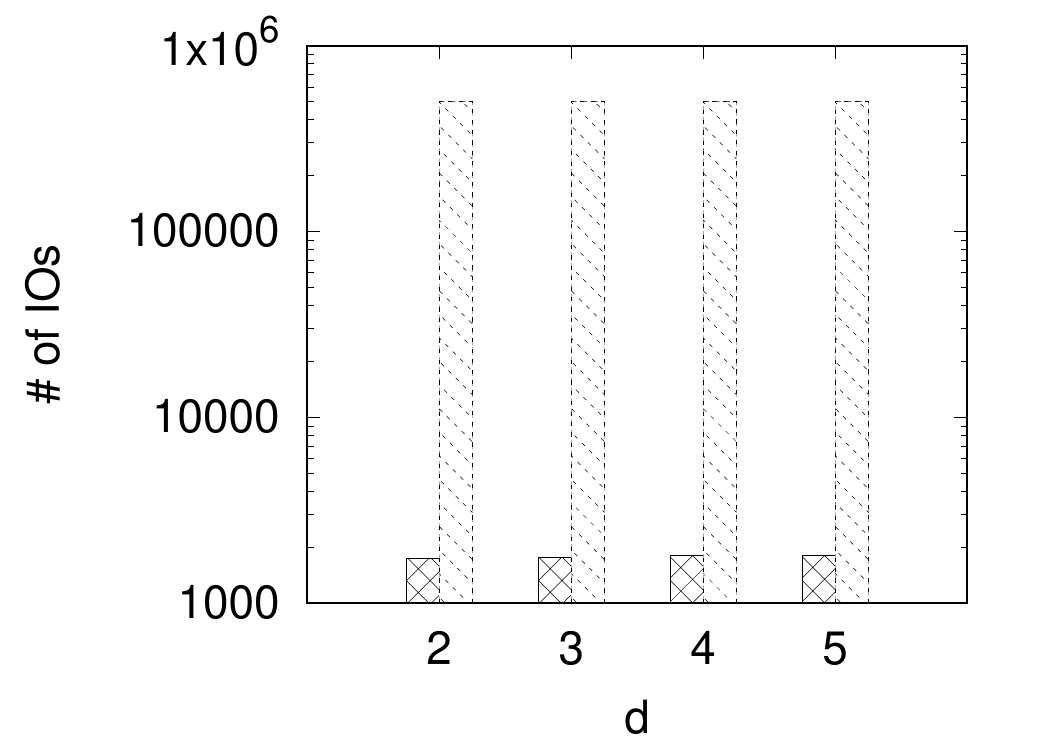}}
    \subfigure[Vary $N$]{
    \label{fig:4-2}
    \includegraphics[width=0.25\textwidth]{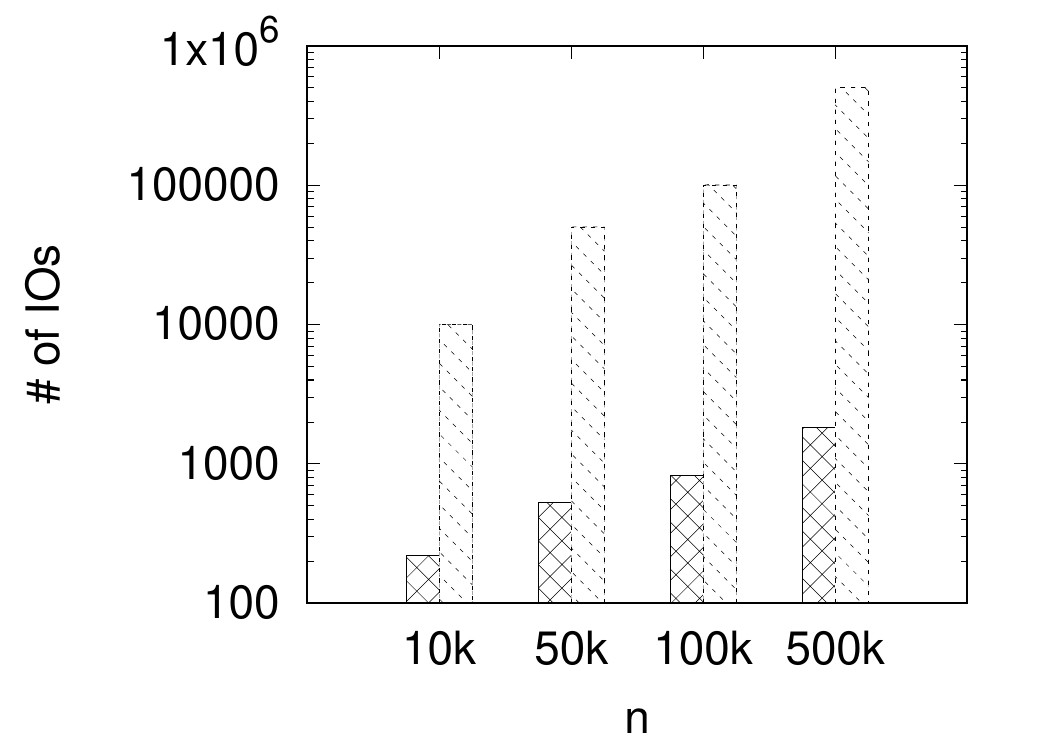}}
  \subfigure[Vary dataset]{
    \label{fig:4-3}
    \includegraphics[width=0.25\textwidth]{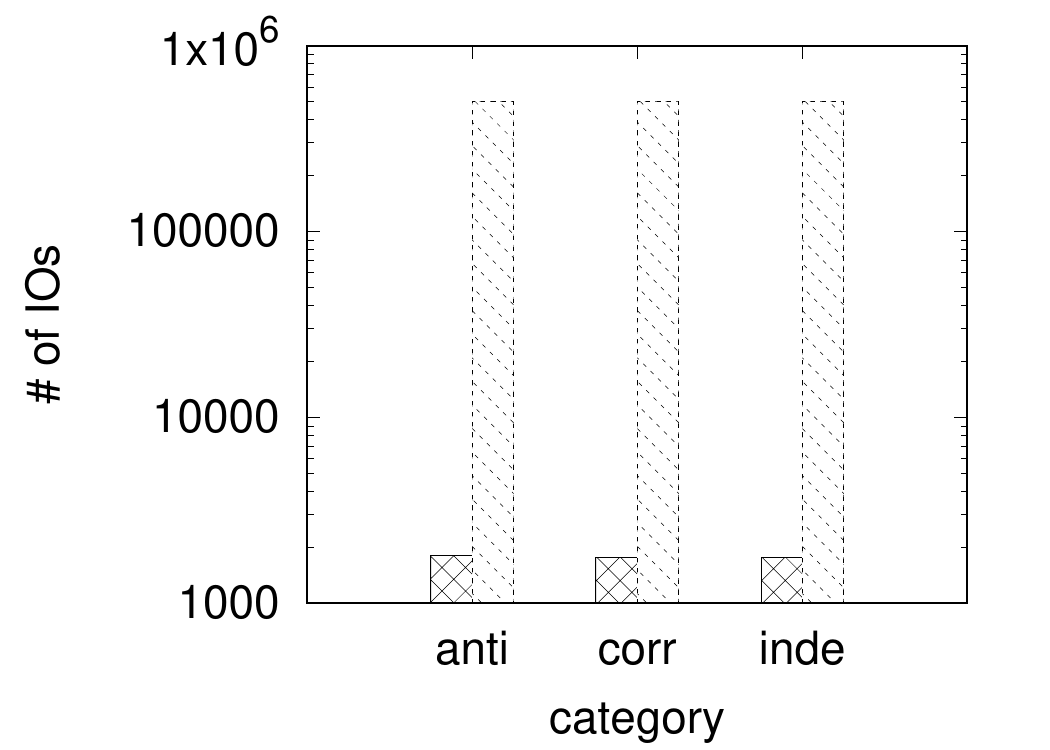}}
  \caption{The Effect of $d$, $N$ and dataset category for the CQPQ$_{\theta}$ Queries}
  \label{fig:4}
\end{figure*}

\begin{figure}[tbp]
\vspace{-4mm}
  \setcounter{subfigure}{0}
  \subfigure[CQPQ$_k$ Queries]{
    \label{fig:4-1}
    \includegraphics[width=0.48\linewidth]{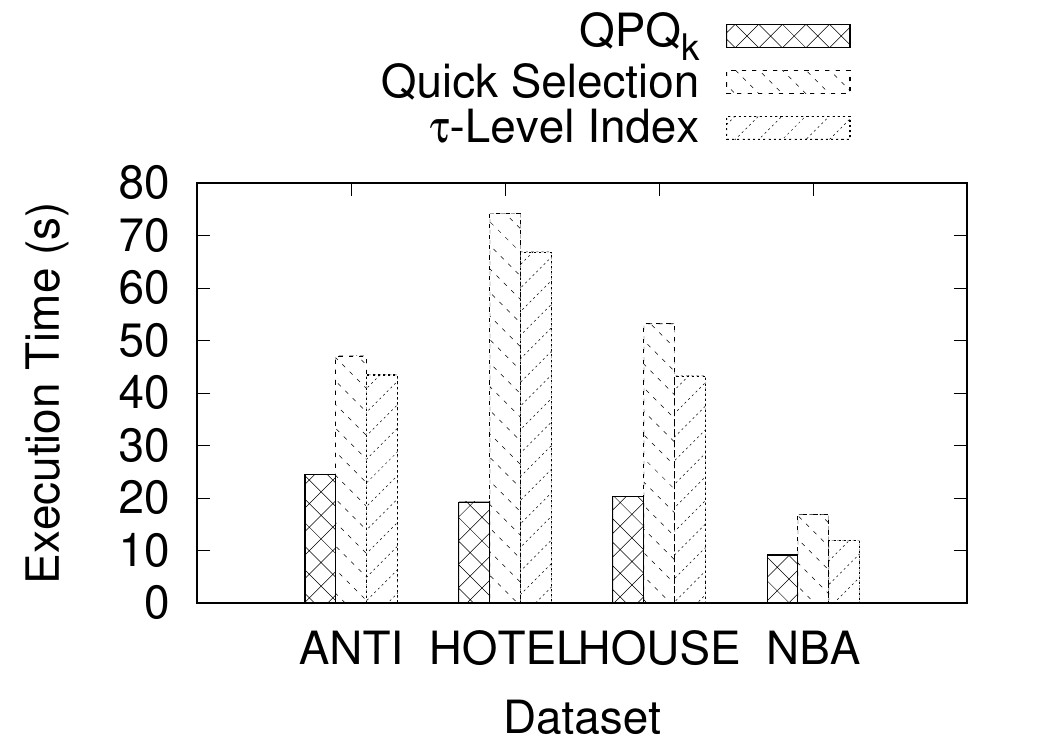}}
    \subfigure[CQPQ$_{\theta}$ Queries]{
    \label{fig:4-2}
    \includegraphics[width=0.48\linewidth]{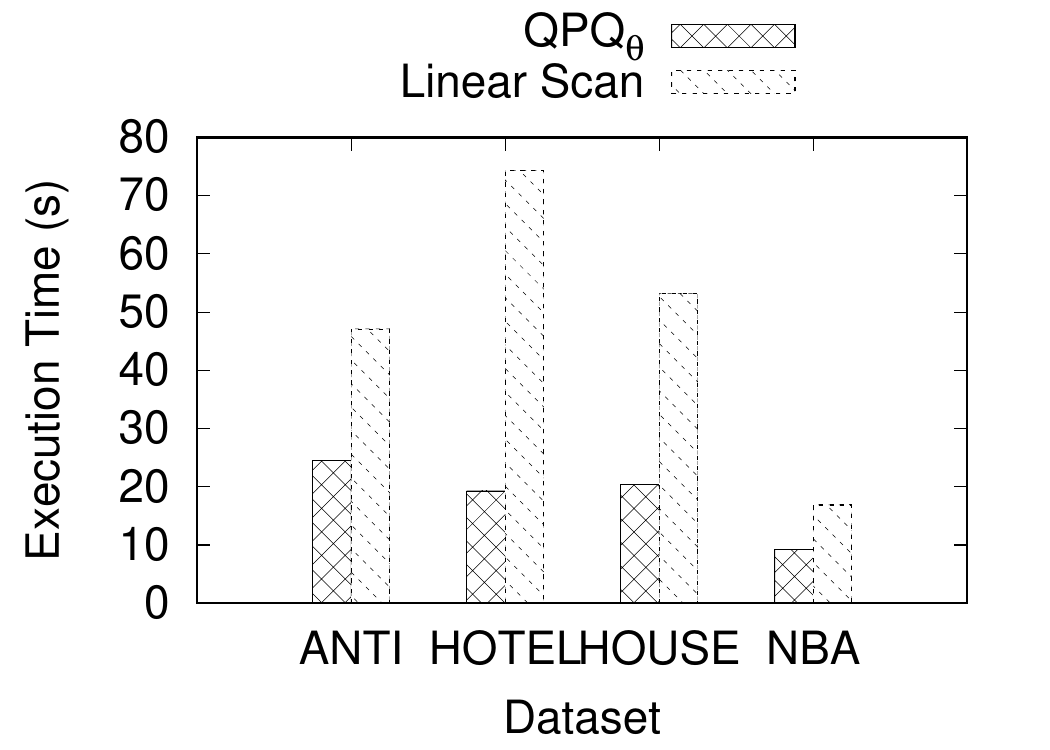}}
  \caption{Execution Time Comparison}
  \label{fig:5}
\end{figure}

\smallskip\noindent\textbf{\underline{Results on Top-$k$ Preference Queries.}}
We first show the comparison results of top-$k$ preference queries (i.e., the CQPQ$_k$ queries).

We first study the effect of $k$. In Figure \ref{fig:1}, we show the results of dataset \emph{ANTI} (the representative synthetic dataset) and the three real datasets. As shown in Figure \ref{fig:1}(a), for dataset \emph{ANTI}, our proposed quantum algorithm QPQ$_k$ has significantly smaller number of IOs than the baselines, which is consistent with the theoretical improvement of our QPQ$_k$ algorithm. As $k$ increases, the IO cost of our QPQ$_k$ grows more obviously first but very slowly for larger $k$, which verifies the theoretical sub-linear growth of our QPQ$_k$ to $k$. In comparison, although baseline Quick Selection has similar number of IOs for varied $k$, the number of IOs is very large (i.e., around $10^6$) where our QPQ$_k$ need IO cost of only around $10^3$ for $k = 1$ and less than $10^4$ for the largest $k$ (i.e., $k = 100$). Baseline $\tau$-Level Index also has very large IO cost. Moreover, $\tau$-Level Index cannot be executed for $k > 10$ due to too long preprocessing time, since the preprocessing time increases exponentially with $k$. We obtain similar results on all the three real datasets, as shown in Figure \ref{fig:1}(a), (b) and (c).

Then, we also study the effect of the number of dimensions $d$, the dataset size $N$ and the different categories of synthetic datasets (where the default synthetic dataset if \emph{ANTI}). As shown in Figure \ref{fig:2}(a), $d$ has little impact on the IO costs of our QPQ$_k$ algorithm and Quick Selection, but the IO cost of $\tau$-Level Index grows significantly with $d$ because the geometric processing of hyperplanes depends heavily on $d$. Still, our QPQ$_k$ has much smaller number of IOs than both baselines. When the dataset size $N$ increases (as illustrated in Figure \ref{fig:2}(b)), all the algorithms have increased IO cost due to more tuples to be processed. Clearly, our QPQ$_k$ algorithm has significantly smaller IO cost (due to utilizing the quantum parallelism) compared with both baselines. As shown in Figure \ref{fig:2}(c), the IO costs of our QPQ$_k$ and Quick Selection also do not depend on the dataset category, while $\tau$-Level Index is more sensitive to dataset category. It indicates that $\tau$-Level Index could have improvement on certain types of data distribution only, but our QPQ$_k$ achieves stably superior performance on IO cost.

\smallskip\noindent\textbf{\underline{Results on Threshold-based Preference Queries.}}
Next, we also show the comparison results of threshold-based preference queries (i.e., the CQPQ$_{\theta}$ queries).

Since the performance is largely affected by the size of the result set (i.e., $k$), we still vary $k$ instead of vary $\theta$ by pick out the $k$-th highest utility from the dataset as the input $\theta$. As shown in Figure \ref{fig:3}, when $k$ is varied, the results are very similar as those the CQPQ$_k$ queries for all the datasets. In particular, the number of IOs of our QPQ$_{\theta}$ algorithm is no more than 500, or even less than 100 for a relatively smaller dataset (e.g., \emph{NBA}) which are close to $\sqrt{N}$, while the number of IOs of baseline Linear Scan is approximately $N$.

Moreover, when varying $d$, $N$ and the dataset category, our QPQ$_{\theta}$ algorithm also obtains superior results, which are similar to the CQPQ$_k$ queries. Baseline Linear Scan has much higher IO cost since its IO cost is linear to $N$ while our QPQ$_{\theta}$ achieves quadratic improvement.

\smallskip\noindent\textbf{\underline{Execution Time Comparison.}}
We also compare the execution time between our algorithms implemented with existing QRAM simulation in Q\# and baselines implemented also with Q\#.

It can be seen from Figure \ref{fig:5}(a) and (b) that, compared with baselines, our proposed quantum algorithms can still achieve superior speed for different datasets, especially those datasets with larger size (e.g., \emph{ANTI}, \emph{HOTEL} and \emph{HOUSE}). However, the execution time improvement (e.g., from 1.3x to 6x) is much smaller than the IO cost improvement (e.g., 1000x). This is because we count each store and load operation of QRAM as 1 IO, but in the existing implementation of QRAM simulators, the time cost cannot be neglected and could be large. Nevertheless, the time improvement of our quantum algorithms could still benefit the real-world applications with the quantum preference queries.

\smallskip\noindent\textbf{\underline{Summary.}} In conclusion, the quantum algorithm performs far better than the classical algorithms from the perspective of the number of memory accesses. With either input $k$ or $\theta$, our QPQ algorithms are 1000$\times$ faster than its classical competitors in typical settings. With the existing QRAM simulators, our proposed quantum algorithms also achieve less execution time than classical baselines. Therefore, we conclude that the quantum algorithms have the potential to outperform classical algorithms.

\section{Conclusion}
\label{sec:conclusion}
In this paper, we discuss four kinds of QPQ problems: QQPQ$_\theta$, CQOQ$_\theta$, CQPQ$_k$ and QQPQ$_k$. We proposed four quantum algorithms to solve these four problems, respectively. For each quantum algorithm, we give an accuracy analysis of the number of memory accesses needed, which shows that the proposed quantum algorithms are at least quadratically faster than their classical competitors. In our experiments, we did simulations to show that to answer a QPQ problem, the quantum algorithms are up to 1000$\times$ faster than their classical competitors, which proved that QPQ problem could be a future direction of the study of preference query problems. The future direction of this work is to consider the quantum algorithms for other types of database queries (e.g., the skyline queries).

\bibliographystyle{ACM-Reference-Format}
\balance
\bibliography{9_reference}


\begin{thebibliography}{64}


\ifx \showCODEN    \undefined \def \showCODEN     #1{\unskip}     \fi
\ifx \showDOI      \undefined \def \showDOI       #1{#1}\fi
\ifx \showISBNx    \undefined \def \showISBNx     #1{\unskip}     \fi
\ifx \showISBNxiii \undefined \def \showISBNxiii  #1{\unskip}     \fi
\ifx \showISSN     \undefined \def \showISSN      #1{\unskip}     \fi
\ifx \showLCCN     \undefined \def \showLCCN      #1{\unskip}     \fi
\ifx \shownote     \undefined \def \shownote      #1{#1}          \fi
\ifx \showarticletitle \undefined \def \showarticletitle #1{#1}   \fi
\ifx \showURL      \undefined \def \showURL       {\relax}        \fi
\providecommand\bibfield[2]{#2}
\providecommand\bibinfo[2]{#2}
\providecommand\natexlab[1]{#1}
\providecommand\showeprint[2][]{arXiv:#2}

\bibitem[Arul(2001)]%
        {arul2001impossibility}
\bibfield{author}{\bibinfo{person}{A~John Arul}.}
  \bibinfo{year}{2001}\natexlab{}.
\newblock \showarticletitle{Impossibility of comparing and sorting quantum
  states}.
\newblock \bibinfo{journal}{\emph{arXiv preprint quant-ph/0107085}}
  (\bibinfo{year}{2001}).
\newblock


\bibitem[Arute et~al\mbox{.}(2019)]%
        {arute2019quantum}
\bibfield{author}{\bibinfo{person}{Frank Arute}, \bibinfo{person}{Kunal Arya},
  \bibinfo{person}{Ryan Babbush}, \bibinfo{person}{Dave Bacon},
  \bibinfo{person}{Joseph~C Bardin}, \bibinfo{person}{Rami Barends},
  \bibinfo{person}{Rupak Biswas}, \bibinfo{person}{Sergio Boixo},
  \bibinfo{person}{Fernando~GSL Brandao}, \bibinfo{person}{David~A Buell},
  {et~al\mbox{.}}} \bibinfo{year}{2019}\natexlab{}.
\newblock \showarticletitle{Quantum supremacy using a programmable
  superconducting processor}.
\newblock \bibinfo{journal}{\emph{Nature}} \bibinfo{volume}{574},
  \bibinfo{number}{7779} (\bibinfo{year}{2019}), \bibinfo{pages}{505--510}.
\newblock


\bibitem[Bayer and McCreight(2002)]%
        {bayer2002organization}
\bibfield{author}{\bibinfo{person}{Rudolf Bayer} {and} \bibinfo{person}{Edward
  McCreight}.} \bibinfo{year}{2002}\natexlab{}.
\newblock \showarticletitle{Organization and maintenance of large ordered
  indexes}.
\newblock In \bibinfo{booktitle}{\emph{Software pioneers}}.
  \bibinfo{publisher}{Springer}, \bibinfo{pages}{245--262}.
\newblock


\bibitem[Boixo et~al\mbox{.}(2018)]%
        {boixo2018characterizing}
\bibfield{author}{\bibinfo{person}{Sergio Boixo}, \bibinfo{person}{Sergei~V
  Isakov}, \bibinfo{person}{Vadim~N Smelyanskiy}, \bibinfo{person}{Ryan
  Babbush}, \bibinfo{person}{Nan Ding}, \bibinfo{person}{Zhang Jiang},
  \bibinfo{person}{Michael~J Bremner}, \bibinfo{person}{John~M Martinis}, {and}
  \bibinfo{person}{Hartmut Neven}.} \bibinfo{year}{2018}\natexlab{}.
\newblock \showarticletitle{Characterizing quantum supremacy in near-term
  devices}.
\newblock \bibinfo{journal}{\emph{Nature Physics}} \bibinfo{volume}{14},
  \bibinfo{number}{6} (\bibinfo{year}{2018}), \bibinfo{pages}{595--600}.
\newblock


\bibitem[Borzsony et~al\mbox{.}(2001)]%
        {borzsony2001skyline}
\bibfield{author}{\bibinfo{person}{Stephan Borzsony}, \bibinfo{person}{Donald
  Kossmann}, {and} \bibinfo{person}{Konrad Stocker}.}
  \bibinfo{year}{2001}\natexlab{}.
\newblock \showarticletitle{The skyline operator}. In
  \bibinfo{booktitle}{\emph{Proceedings 17th international conference on data
  engineering}}. IEEE, \bibinfo{pages}{421--430}.
\newblock


\bibitem[Boyer et~al\mbox{.}(1998)]%
        {boyer1998tight}
\bibfield{author}{\bibinfo{person}{Michel Boyer}, \bibinfo{person}{Gilles
  Brassard}, \bibinfo{person}{Peter H{\o}yer}, {and} \bibinfo{person}{Alain
  Tapp}.} \bibinfo{year}{1998}\natexlab{}.
\newblock \showarticletitle{Tight bounds on quantum searching}.
\newblock \bibinfo{journal}{\emph{Fortschritte der Physik: Progress of
  Physics}} \bibinfo{volume}{46}, \bibinfo{number}{4-5} (\bibinfo{year}{1998}),
  \bibinfo{pages}{493--505}.
\newblock


\bibitem[Brassard et~al\mbox{.}(2002)]%
        {brassard2002quantum}
\bibfield{author}{\bibinfo{person}{Gilles Brassard}, \bibinfo{person}{Michele
  Mosca}, {and} \bibinfo{person}{Alain Tapp}.} \bibinfo{year}{2002}\natexlab{}.
\newblock \showarticletitle{Quantum amplitude amplification and estimation}.
\newblock  (\bibinfo{year}{2002}).
\newblock


\bibitem[Coppersmith(2002)]%
        {coppersmith2002approximate}
\bibfield{author}{\bibinfo{person}{Don Coppersmith}.}
  \bibinfo{year}{2002}\natexlab{}.
\newblock \showarticletitle{An approximate Fourier transform useful in quantum
  factoring}.
\newblock \bibinfo{journal}{\emph{arXiv preprint quant-ph/0201067}}
  (\bibinfo{year}{2002}).
\newblock


\bibitem[Cormen et~al\mbox{.}(2022)]%
        {cormen2022introduction}
\bibfield{author}{\bibinfo{person}{Thomas~H Cormen}, \bibinfo{person}{Charles~E
  Leiserson}, \bibinfo{person}{Ronald~L Rivest}, {and}
  \bibinfo{person}{Clifford Stein}.} \bibinfo{year}{2022}\natexlab{}.
\newblock \bibinfo{booktitle}{\emph{Introduction to algorithms}}.
\newblock \bibinfo{publisher}{MIT press}.
\newblock


\bibitem[Dataset(2024a)]%
        {hotelDataset}
\bibfield{author}{\bibinfo{person}{Hotel Dataset}.}
  \bibinfo{year}{2024}\natexlab{a}.
\newblock
\newblock
\urldef\tempurl%
\url{https://www.hotels-base.com/}
\showURL{%
\tempurl}


\bibitem[Dataset(2024b)]%
        {houseDataset}
\bibfield{author}{\bibinfo{person}{House Dataset}.}
  \bibinfo{year}{2024}\natexlab{b}.
\newblock
\newblock
\urldef\tempurl%
\url{https://www.ipums.org/}
\showURL{%
\tempurl}


\bibitem[Dataset(2024c)]%
        {NBADataset}
\bibfield{author}{\bibinfo{person}{NBA Dataset}.}
  \bibinfo{year}{2024}\natexlab{c}.
\newblock
\newblock
\urldef\tempurl%
\url{https://www.basketball-reference.com/}
\showURL{%
\tempurl}


\bibitem[Dirac(1939)]%
        {dirac1939new}
\bibfield{author}{\bibinfo{person}{Paul Adrien~Maurice Dirac}.}
  \bibinfo{year}{1939}\natexlab{}.
\newblock \showarticletitle{A new notation for quantum mechanics}. In
  \bibinfo{booktitle}{\emph{Mathematical Proceedings of the Cambridge
  Philosophical Society}}, Vol.~\bibinfo{volume}{35}. Cambridge University
  Press, \bibinfo{pages}{416--418}.
\newblock


\bibitem[D{\"o}ring et~al\mbox{.}(2008)]%
        {doring2008advanced}
\bibfield{author}{\bibinfo{person}{Sven D{\"o}ring}, \bibinfo{person}{Timotheus
  Preisinger}, {and} \bibinfo{person}{Markus Endres}.}
  \bibinfo{year}{2008}\natexlab{}.
\newblock \showarticletitle{Advanced preference query processing for
  e-commerce}. In \bibinfo{booktitle}{\emph{Proceedings of the 2008 ACM
  symposium on Applied computing}}. \bibinfo{pages}{1457--1462}.
\newblock


\bibitem[D{\"u}rr et~al\mbox{.}(2006)]%
        {durr2006quantum}
\bibfield{author}{\bibinfo{person}{Christoph D{\"u}rr}, \bibinfo{person}{Mark
  Heiligman}, \bibinfo{person}{Peter HOyer}, {and} \bibinfo{person}{Mehdi
  Mhalla}.} \bibinfo{year}{2006}\natexlab{}.
\newblock \showarticletitle{Quantum query complexity of some graph problems}.
\newblock \bibinfo{journal}{\emph{SIAM J. Comput.}} \bibinfo{volume}{35},
  \bibinfo{number}{6} (\bibinfo{year}{2006}), \bibinfo{pages}{1310--1328}.
\newblock


\bibitem[Durr and Hoyer(1996)]%
        {durr1996quantum}
\bibfield{author}{\bibinfo{person}{Christoph Durr} {and} \bibinfo{person}{Peter
  Hoyer}.} \bibinfo{year}{1996}\natexlab{}.
\newblock \showarticletitle{A quantum algorithm for finding the minimum}.
\newblock \bibinfo{journal}{\emph{arXiv preprint quant-ph/9607014}}
  (\bibinfo{year}{1996}).
\newblock


\bibitem[Gallego and Wang(2019)]%
        {gallego2019threshold}
\bibfield{author}{\bibinfo{person}{Guillermo Gallego} {and}
  \bibinfo{person}{Ruxian Wang}.} \bibinfo{year}{2019}\natexlab{}.
\newblock \showarticletitle{Threshold utility model with applications to
  retailing and discrete choice models}.
\newblock \bibinfo{journal}{\emph{Available at SSRN 3420155}}
  (\bibinfo{year}{2019}).
\newblock


\bibitem[Gidney~C and contributors(2018)]%
        {cirq}
\bibfield{author}{\bibinfo{person}{Bacon~D Gidney~C} {and}
  \bibinfo{person}{contributors}.} \bibinfo{year}{2018}\natexlab{}.
\newblock \bibinfo{title}{Cirq: A python framework for creating, editing, and
  invoking noisy intermediate scale quantum (NISQ) circuits}.
\newblock \bibinfo{howpublished}{https://github.com/quantumlib/Cirq}.
\newblock


\bibitem[Grover(1996a)]%
        {grover1996fast}
\bibfield{author}{\bibinfo{person}{Lov~K Grover}.}
  \bibinfo{year}{1996}\natexlab{a}.
\newblock \showarticletitle{A fast quantum mechanical algorithm for database
  search}. In \bibinfo{booktitle}{\emph{Proceedings of the twenty-eighth annual
  ACM symposium on Theory of computing}}. \bibinfo{pages}{212--219}.
\newblock


\bibitem[Grover(1996b)]%
        {grover1996fastm}
\bibfield{author}{\bibinfo{person}{Lov~K Grover}.}
  \bibinfo{year}{1996}\natexlab{b}.
\newblock \showarticletitle{A fast quantum mechanical algorithm for estimating
  the median}.
\newblock \bibinfo{journal}{\emph{arXiv preprint quant-ph/9607024}}
  (\bibinfo{year}{1996}).
\newblock


\bibitem[Grover and Radhakrishnan(2005)]%
        {grover2005partial}
\bibfield{author}{\bibinfo{person}{Lov~K Grover} {and}
  \bibinfo{person}{Jaikumar Radhakrishnan}.} \bibinfo{year}{2005}\natexlab{}.
\newblock \showarticletitle{Is partial quantum search of a database any
  easier?}. In \bibinfo{booktitle}{\emph{Proceedings of the seventeenth annual
  ACM symposium on Parallelism in algorithms and architectures}}.
  \bibinfo{pages}{186--194}.
\newblock


\bibitem[Hadamard(1893)]%
        {hadamard1893resolution}
\bibfield{author}{\bibinfo{person}{Jacques Hadamard}.}
  \bibinfo{year}{1893}\natexlab{}.
\newblock \showarticletitle{Resolution d'une question relative aux
  determinants}.
\newblock \bibinfo{journal}{\emph{Bull. des sciences math.}}
  \bibinfo{volume}{2} (\bibinfo{year}{1893}), \bibinfo{pages}{240--246}.
\newblock


\bibitem[Harrow et~al\mbox{.}(2009)]%
        {harrow2009quantum}
\bibfield{author}{\bibinfo{person}{Aram~W Harrow}, \bibinfo{person}{Avinatan
  Hassidim}, {and} \bibinfo{person}{Seth Lloyd}.}
  \bibinfo{year}{2009}\natexlab{}.
\newblock \showarticletitle{Quantum algorithm for linear systems of equations}.
\newblock \bibinfo{journal}{\emph{Physical review letters}}
  \bibinfo{volume}{103}, \bibinfo{number}{15} (\bibinfo{year}{2009}),
  \bibinfo{pages}{150502}.
\newblock


\bibitem[Hosoyamada and Sasaki(2018)]%
        {hosoyamada2018quantum}
\bibfield{author}{\bibinfo{person}{Akinori Hosoyamada} {and}
  \bibinfo{person}{Yu Sasaki}.} \bibinfo{year}{2018}\natexlab{}.
\newblock \showarticletitle{Quantum Demiric-Sel{\c{c}}uk meet-in-the-middle
  attacks: applications to 6-round generic Feistel constructions}. In
  \bibinfo{booktitle}{\emph{International Conference on Security and
  Cryptography for Networks}}. Springer, \bibinfo{pages}{386--403}.
\newblock


\bibitem[Jerbi et~al\mbox{.}(2021)]%
        {jerbi2021parametrized}
\bibfield{author}{\bibinfo{person}{Sofiene Jerbi}, \bibinfo{person}{Casper
  Gyurik}, \bibinfo{person}{Simon Marshall}, \bibinfo{person}{Hans Briegel},
  {and} \bibinfo{person}{Vedran Dunjko}.} \bibinfo{year}{2021}\natexlab{}.
\newblock \showarticletitle{Parametrized Quantum Policies for Reinforcement
  Learning}.
\newblock \bibinfo{journal}{\emph{Advances in Neural Information Processing
  Systems}}  \bibinfo{volume}{34} (\bibinfo{year}{2021}).
\newblock


\bibitem[Kapoor et~al\mbox{.}(2016)]%
        {kapoor2016quantum}
\bibfield{author}{\bibinfo{person}{Ashish Kapoor}, \bibinfo{person}{Nathan
  Wiebe}, {and} \bibinfo{person}{Krysta Svore}.}
  \bibinfo{year}{2016}\natexlab{}.
\newblock \showarticletitle{Quantum perceptron models}.
\newblock \bibinfo{journal}{\emph{Advances in neural information processing
  systems}}  \bibinfo{volume}{29} (\bibinfo{year}{2016}).
\newblock


\bibitem[Kapralov et~al\mbox{.}(2020)]%
        {kapralov2020fast}
\bibfield{author}{\bibinfo{person}{Ruslan Kapralov}, \bibinfo{person}{Kamil
  Khadiev}, \bibinfo{person}{Joshua Mokut}, \bibinfo{person}{Yixin Shen}, {and}
  \bibinfo{person}{Maxim Yagafarov}.} \bibinfo{year}{2020}\natexlab{}.
\newblock \showarticletitle{Fast Classical and Quantum Algorithms for Online $
  k $-server Problem on Trees}.
\newblock \bibinfo{journal}{\emph{arXiv preprint arXiv:2008.00270}}
  (\bibinfo{year}{2020}).
\newblock


\bibitem[Kerenidis et~al\mbox{.}(2019)]%
        {kerenidis2019q}
\bibfield{author}{\bibinfo{person}{Iordanis Kerenidis}, \bibinfo{person}{Jonas
  Landman}, \bibinfo{person}{Alessandro Luongo}, {and} \bibinfo{person}{Anupam
  Prakash}.} \bibinfo{year}{2019}\natexlab{}.
\newblock \showarticletitle{q-means: A quantum algorithm for unsupervised
  machine learning}.
\newblock \bibinfo{journal}{\emph{Advances in Neural Information Processing
  Systems}}  \bibinfo{volume}{32} (\bibinfo{year}{2019}).
\newblock


\bibitem[Kerenidis and Prakash(2017)]%
        {kerenidis2017quantum}
\bibfield{author}{\bibinfo{person}{Iordanis Kerenidis} {and}
  \bibinfo{person}{Anupam Prakash}.} \bibinfo{year}{2017}\natexlab{}.
\newblock \showarticletitle{Quantum Recommendation Systems}. In
  \bibinfo{booktitle}{\emph{8th Innovations in Theoretical Computer Science
  Conference (ITCS 2017)}}. Schloss Dagstuhl-Leibniz-Zentrum fuer Informatik.
\newblock


\bibitem[Kieferova et~al\mbox{.}(2021)]%
        {kieferova2021quantum}
\bibfield{author}{\bibinfo{person}{Maria Kieferova},
  \bibinfo{person}{Ortiz~Marrero Carlos}, {and} \bibinfo{person}{Nathan
  Wiebe}.} \bibinfo{year}{2021}\natexlab{}.
\newblock \showarticletitle{Quantum Generative Training Using
  R$\backslash$'enyi Divergences}.
\newblock \bibinfo{journal}{\emph{arXiv preprint arXiv:2106.09567}}
  (\bibinfo{year}{2021}).
\newblock


\bibitem[Lee et~al\mbox{.}(2009)]%
        {lee2009personalized}
\bibfield{author}{\bibinfo{person}{Jongwuk Lee}, \bibinfo{person}{Gae-won You},
  {and} \bibinfo{person}{Seung-won Hwang}.} \bibinfo{year}{2009}\natexlab{}.
\newblock \showarticletitle{Personalized top-k skyline queries in
  high-dimensional space}.
\newblock \bibinfo{journal}{\emph{Information Systems}} \bibinfo{volume}{34},
  \bibinfo{number}{1} (\bibinfo{year}{2009}), \bibinfo{pages}{45--61}.
\newblock


\bibitem[Li et~al\mbox{.}(2021b)]%
        {li2021vsql}
\bibfield{author}{\bibinfo{person}{Guangxi Li}, \bibinfo{person}{Zhixin Song},
  {and} \bibinfo{person}{Xin Wang}.} \bibinfo{year}{2021}\natexlab{b}.
\newblock \showarticletitle{VSQL: variational shadow quantum learning for
  classification}. In \bibinfo{booktitle}{\emph{Proceedings of the AAAI
  Conference on Artificial Intelligence}}, Vol.~\bibinfo{volume}{35}.
  \bibinfo{pages}{8357--8365}.
\newblock


\bibitem[Li et~al\mbox{.}(2021a)]%
        {li2021quantum}
\bibfield{author}{\bibinfo{person}{Qiuchi Li}, \bibinfo{person}{Dimitris
  Gkoumas}, \bibinfo{person}{Alessandro Sordoni}, \bibinfo{person}{Jian-Yun
  Nie}, {and} \bibinfo{person}{Massimo Melucci}.}
  \bibinfo{year}{2021}\natexlab{a}.
\newblock \showarticletitle{Quantum-inspired neural network for conversational
  emotion recognition}. In \bibinfo{booktitle}{\emph{Proceedings of the AAAI
  Conference on Artificial Intelligence}}, Vol.~\bibinfo{volume}{35}.
  \bibinfo{pages}{13270--13278}.
\newblock


\bibitem[Li et~al\mbox{.}(2019)]%
        {li2019sublinear}
\bibfield{author}{\bibinfo{person}{Tongyang Li}, \bibinfo{person}{Shouvanik
  Chakrabarti}, {and} \bibinfo{person}{Xiaodi Wu}.}
  \bibinfo{year}{2019}\natexlab{}.
\newblock \showarticletitle{Sublinear quantum algorithms for training linear
  and kernel-based classifiers}. In \bibinfo{booktitle}{\emph{International
  Conference on Machine Learning}}. PMLR, \bibinfo{pages}{3815--3824}.
\newblock


\bibitem[Li et~al\mbox{.}(2021c)]%
        {li2021sublinear}
\bibfield{author}{\bibinfo{person}{Tongyang Li}, \bibinfo{person}{Chunhao
  Wang}, \bibinfo{person}{Shouvanik Chakrabarti}, {and} \bibinfo{person}{Xiaodi
  Wu}.} \bibinfo{year}{2021}\natexlab{c}.
\newblock \showarticletitle{Sublinear Classical and Quantum Algorithms for
  General Matrix Games}. In \bibinfo{booktitle}{\emph{Proceedings of the AAAI
  Conference on Artificial Intelligence}}, Vol.~\bibinfo{volume}{35}.
  \bibinfo{pages}{8465--8473}.
\newblock


\bibitem[Lian and Chen(2009)]%
        {lian2009top}
\bibfield{author}{\bibinfo{person}{Xiang Lian} {and} \bibinfo{person}{Lei
  Chen}.} \bibinfo{year}{2009}\natexlab{}.
\newblock \showarticletitle{Top-k dominating queries in uncertain databases}.
  In \bibinfo{booktitle}{\emph{Proceedings of the 12th international conference
  on extending database technology: advances in database technology}}.
  \bibinfo{pages}{660--671}.
\newblock


\bibitem[Liang et~al\mbox{.}(2011)]%
        {liang2011utility}
\bibfield{author}{\bibinfo{person}{Shenshen Liang}, \bibinfo{person}{Ying Liu},
  \bibinfo{person}{Liheng Jian}, \bibinfo{person}{Yang Gao}, {and}
  \bibinfo{person}{Zhu Lin}.} \bibinfo{year}{2011}\natexlab{}.
\newblock \showarticletitle{A utility-based recommendation approach for
  academic literatures}. In \bibinfo{booktitle}{\emph{2011 IEEE/WIC/ACM
  International Conferences on Web Intelligence and Intelligent Agent
  Technology}}, Vol.~\bibinfo{volume}{3}. IEEE, \bibinfo{pages}{229--232}.
\newblock


\bibitem[Miao et~al\mbox{.}(2016)]%
        {miao2016si2p}
\bibfield{author}{\bibinfo{person}{Xiaoye Miao}, \bibinfo{person}{Yunjun Gao},
  \bibinfo{person}{Gang Chen}, \bibinfo{person}{Huiyong Cui},
  \bibinfo{person}{Chong Guo}, {and} \bibinfo{person}{Weida Pan}.}
  \bibinfo{year}{2016}\natexlab{}.
\newblock \showarticletitle{SI2P: A restaurant recommendation system using
  preference queries over incomplete information}.
\newblock \bibinfo{journal}{\emph{Proceedings of the VLDB Endowment}}
  \bibinfo{volume}{9}, \bibinfo{number}{13} (\bibinfo{year}{2016}),
  \bibinfo{pages}{1509--1512}.
\newblock


\bibitem[Montanaro(2017)]%
        {montanaro2017quantum}
\bibfield{author}{\bibinfo{person}{Ashley Montanaro}.}
  \bibinfo{year}{2017}\natexlab{}.
\newblock \showarticletitle{Quantum pattern matching fast on average}.
\newblock \bibinfo{journal}{\emph{Algorithmica}} \bibinfo{volume}{77},
  \bibinfo{number}{1} (\bibinfo{year}{2017}), \bibinfo{pages}{16--39}.
\newblock


\bibitem[Naya-Plasencia and Schrottenloher(2020)]%
        {naya2020optimal}
\bibfield{author}{\bibinfo{person}{Mar{\'\i}a Naya-Plasencia} {and}
  \bibinfo{person}{Andr{\'e} Schrottenloher}.} \bibinfo{year}{2020}\natexlab{}.
\newblock \showarticletitle{Optimal Merging in Quantum $k$-xor and $k$-sum
  Algorithms}. In \bibinfo{booktitle}{\emph{Annual International Conference on
  the Theory and Applications of Cryptographic Techniques}}. Springer,
  \bibinfo{pages}{311--340}.
\newblock


\bibitem[Nielsen and Chuang(2001)]%
        {nielsen2001quantum}
\bibfield{author}{\bibinfo{person}{Michael~A Nielsen} {and}
  \bibinfo{person}{Isaac~L Chuang}.} \bibinfo{year}{2001}\natexlab{}.
\newblock \showarticletitle{Quantum computation and quantum information}.
\newblock \bibinfo{journal}{\emph{Phys. Today}} \bibinfo{volume}{54},
  \bibinfo{number}{2} (\bibinfo{year}{2001}), \bibinfo{pages}{60}.
\newblock


\bibitem[Peng and Wong(2015)]%
        {peng2015k}
\bibfield{author}{\bibinfo{person}{Peng Peng} {and} \bibinfo{person}{Raymong
  Chi-Wing Wong}.} \bibinfo{year}{2015}\natexlab{}.
\newblock \showarticletitle{k-hit query: Top-k query with probabilistic utility
  function}. In \bibinfo{booktitle}{\emph{Proceedings of the 2015 ACM SIGMOD
  International Conference on Management of Data}}. \bibinfo{pages}{577--592}.
\newblock


\bibitem[Pivert and Bosc(2012)]%
        {pivert2012fuzzy}
\bibfield{author}{\bibinfo{person}{Olivier Pivert} {and}
  \bibinfo{person}{Patrick Bosc}.} \bibinfo{year}{2012}\natexlab{}.
\newblock \bibinfo{booktitle}{\emph{Fuzzy preference queries to relational
  databases}}.
\newblock \bibinfo{publisher}{World Scientific}.
\newblock


\bibitem[qRAM Library~for Q\#(2024)]%
        {qramLibrary}
\bibfield{author}{\bibinfo{person}{qRAM Library~for Q\#}.}
  \bibinfo{year}{2024}\natexlab{}.
\newblock
\newblock
\urldef\tempurl%
\url{https://github.com/qsharp-community/qram}
\showURL{%
\tempurl}


\bibitem[Rebentrost et~al\mbox{.}(2014)]%
        {rebentrost2014quantum}
\bibfield{author}{\bibinfo{person}{Patrick Rebentrost}, \bibinfo{person}{Masoud
  Mohseni}, {and} \bibinfo{person}{Seth Lloyd}.}
  \bibinfo{year}{2014}\natexlab{}.
\newblock \showarticletitle{Quantum support vector machine for big data
  classification}.
\newblock \bibinfo{journal}{\emph{Physical review letters}}
  \bibinfo{volume}{113}, \bibinfo{number}{13} (\bibinfo{year}{2014}),
  \bibinfo{pages}{130503}.
\newblock


\bibitem[Regenwetter et~al\mbox{.}(1998)]%
        {regenwetter1998random}
\bibfield{author}{\bibinfo{person}{Michel Regenwetter}, \bibinfo{person}{AAJ
  Marley}, {and} \bibinfo{person}{H Joe}.} \bibinfo{year}{1998}\natexlab{}.
\newblock \showarticletitle{Random utility threshold models of subset choice}.
\newblock \bibinfo{journal}{\emph{Australian Journal of Psychology}}
  \bibinfo{volume}{50}, \bibinfo{number}{3} (\bibinfo{year}{1998}),
  \bibinfo{pages}{175--185}.
\newblock


\bibitem[Rocha-Junior et~al\mbox{.}(2010)]%
        {rocha2010efficient}
\bibfield{author}{\bibinfo{person}{Joao~B Rocha-Junior},
  \bibinfo{person}{Akrivi Vlachou}, \bibinfo{person}{Christos Doulkeridis},
  {and} \bibinfo{person}{Kjetil N{\o}rv{\aa}g}.}
  \bibinfo{year}{2010}\natexlab{}.
\newblock \showarticletitle{Efficient processing of top-k spatial preference
  queries}.
\newblock \bibinfo{journal}{\emph{Proceedings of the VLDB Endowment}}
  \bibinfo{volume}{4}, \bibinfo{number}{2} (\bibinfo{year}{2010}),
  \bibinfo{pages}{93--104}.
\newblock


\bibitem[Saeedi and Arodz(2019)]%
        {saeedi2019quantum}
\bibfield{author}{\bibinfo{person}{Seyran Saeedi} {and} \bibinfo{person}{Tom
  Arodz}.} \bibinfo{year}{2019}\natexlab{}.
\newblock \showarticletitle{Quantum sparse support vector machines}.
\newblock \bibinfo{journal}{\emph{arXiv preprint arXiv:1902.01879}}
  (\bibinfo{year}{2019}).
\newblock


\bibitem[Schenkel et~al\mbox{.}(2008)]%
        {schenkel2008efficient}
\bibfield{author}{\bibinfo{person}{Ralf Schenkel}, \bibinfo{person}{Tom
  Crecelius}, \bibinfo{person}{Mouna Kacimi}, \bibinfo{person}{Sebastian
  Michel}, \bibinfo{person}{Thomas Neumann}, \bibinfo{person}{Josiane~X
  Parreira}, {and} \bibinfo{person}{Gerhard Weikum}.}
  \bibinfo{year}{2008}\natexlab{}.
\newblock \showarticletitle{Efficient top-k querying over social-tagging
  networks}. In \bibinfo{booktitle}{\emph{Proceedings of the 31st annual
  international ACM SIGIR conference on Research and development in information
  retrieval}}. \bibinfo{pages}{523--530}.
\newblock


\bibitem[Schr{\"o}dinger(1935)]%
        {schrodinger1935discussion}
\bibfield{author}{\bibinfo{person}{Erwin Schr{\"o}dinger}.}
  \bibinfo{year}{1935}\natexlab{}.
\newblock \showarticletitle{Discussion of probability relations between
  separated systems}. In \bibinfo{booktitle}{\emph{Mathematical Proceedings of
  the Cambridge Philosophical Society}}, Vol.~\bibinfo{volume}{31}. Cambridge
  University Press, \bibinfo{pages}{555--563}.
\newblock


\bibitem[Seidel et~al\mbox{.}(2021)]%
        {seidel2021automatic}
\bibfield{author}{\bibinfo{person}{Raphael Seidel}, \bibinfo{person}{Colin
  Kai-Uwe Becker}, \bibinfo{person}{Sebastian Bock}, \bibinfo{person}{Nikolay
  Tcholtchev}, \bibinfo{person}{Ilie-Daniel Gheorge-Pop}, {and}
  \bibinfo{person}{Manfred Hauswirth}.} \bibinfo{year}{2021}\natexlab{}.
\newblock \showarticletitle{Automatic Generation of Grover Quantum Oracles for
  Arbitrary Data Structures}.
\newblock \bibinfo{journal}{\emph{arXiv preprint arXiv:2110.07545}}
  (\bibinfo{year}{2021}).
\newblock


\bibitem[Shor(1994)]%
        {shor1994algorithms}
\bibfield{author}{\bibinfo{person}{Peter~W Shor}.}
  \bibinfo{year}{1994}\natexlab{}.
\newblock \showarticletitle{Algorithms for quantum computation: discrete
  logarithms and factoring}. In \bibinfo{booktitle}{\emph{Proceedings 35th
  annual symposium on foundations of computer science}}. Ieee,
  \bibinfo{pages}{124--134}.
\newblock


\bibitem[Soliman and Ilyas(2009)]%
        {Soliman2009}
\bibfield{author}{\bibinfo{person}{Mohamed~A. Soliman} {and}
  \bibinfo{person}{Ihab~F. Ilyas}.} \bibinfo{year}{2009}\natexlab{}.
\newblock \showarticletitle{Ranking with uncertain scores}. In
  \bibinfo{booktitle}{\emph{Proceedings of the International Conference on Data
  Engineering}}. \bibinfo{pages}{317--328}.
\newblock
\showISBNx{9780769535456}
\showISSN{10844627}


\bibitem[Tang et~al\mbox{.}(2021)]%
        {tang2021m}
\bibfield{author}{\bibinfo{person}{Bo Tang}, \bibinfo{person}{Kyriakos
  Mouratidis}, {and} \bibinfo{person}{Mingji Han}.}
  \bibinfo{year}{2021}\natexlab{}.
\newblock \showarticletitle{On m-impact regions and standing top-k influence
  problems}. In \bibinfo{booktitle}{\emph{Proceedings of the 2021 International
  Conference on Management of Data}}. \bibinfo{pages}{1784--1796}.
\newblock


\bibitem[Tang et~al\mbox{.}(2019)]%
        {tang2019creating}
\bibfield{author}{\bibinfo{person}{Bo Tang}, \bibinfo{person}{Kyriakos
  Mouratidis}, \bibinfo{person}{Man~Lung Yiu}, {and} \bibinfo{person}{Zhenyu
  Chen}.} \bibinfo{year}{2019}\natexlab{}.
\newblock \showarticletitle{Creating top ranking options in the continuous
  option and preference space}.
\newblock  (\bibinfo{year}{2019}).
\newblock


\bibitem[Terhal(2018)]%
        {terhal2018quantum}
\bibfield{author}{\bibinfo{person}{Barbara~M Terhal}.}
  \bibinfo{year}{2018}\natexlab{}.
\newblock \showarticletitle{Quantum supremacy, here we come}.
\newblock \bibinfo{journal}{\emph{Nature Physics}} \bibinfo{volume}{14},
  \bibinfo{number}{6} (\bibinfo{year}{2018}), \bibinfo{pages}{530--531}.
\newblock


\bibitem[Toffoli(1980)]%
        {toffoli1980reversible}
\bibfield{author}{\bibinfo{person}{Tommaso Toffoli}.}
  \bibinfo{year}{1980}\natexlab{}.
\newblock \showarticletitle{Reversible computing}. In
  \bibinfo{booktitle}{\emph{International colloquium on automata, languages,
  and programming}}. Springer, \bibinfo{pages}{632--644}.
\newblock


\bibitem[Wang et~al\mbox{.}(2017)]%
        {wang2017skyline}
\bibfield{author}{\bibinfo{person}{Yan Wang}, \bibinfo{person}{Zhan Shi},
  \bibinfo{person}{Junlu Wang}, \bibinfo{person}{Lingfeng Sun}, {and}
  \bibinfo{person}{Baoyan Song}.} \bibinfo{year}{2017}\natexlab{}.
\newblock \showarticletitle{Skyline preference query based on massive and
  incomplete dataset}.
\newblock \bibinfo{journal}{\emph{IEEE Access}}  \bibinfo{volume}{5}
  (\bibinfo{year}{2017}), \bibinfo{pages}{3183--3192}.
\newblock


\bibitem[Wiebe et~al\mbox{.}(2012)]%
        {wiebe2012quantum}
\bibfield{author}{\bibinfo{person}{Nathan Wiebe}, \bibinfo{person}{Daniel
  Braun}, {and} \bibinfo{person}{Seth Lloyd}.} \bibinfo{year}{2012}\natexlab{}.
\newblock \showarticletitle{Quantum algorithm for data fitting}.
\newblock \bibinfo{journal}{\emph{Physical review letters}}
  \bibinfo{volume}{109}, \bibinfo{number}{5} (\bibinfo{year}{2012}),
  \bibinfo{pages}{050505}.
\newblock


\bibitem[Wiebe et~al\mbox{.}(2015)]%
        {wiebe2015quantum}
\bibfield{author}{\bibinfo{person}{Nathan Wiebe}, \bibinfo{person}{Ashish
  Kapoor}, {and} \bibinfo{person}{Krysta~M Svore}.}
  \bibinfo{year}{2015}\natexlab{}.
\newblock \showarticletitle{Quantum algorithms for nearest-neighbor methods for
  supervised and unsupervised learning}.
\newblock \bibinfo{journal}{\emph{Quantum Information \& Computation}}
  \bibinfo{volume}{15}, \bibinfo{number}{3-4} (\bibinfo{year}{2015}),
  \bibinfo{pages}{316--356}.
\newblock


\bibitem[Wille et~al\mbox{.}(2019)]%
        {wille2019ibm}
\bibfield{author}{\bibinfo{person}{Robert Wille}, \bibinfo{person}{Rod
  Van~Meter}, {and} \bibinfo{person}{Yehuda Naveh}.}
  \bibinfo{year}{2019}\natexlab{}.
\newblock \showarticletitle{IBM’s Qiskit tool chain: Working with and
  developing for real quantum computers}. In \bibinfo{booktitle}{\emph{2019
  Design, Automation \& Test in Europe Conference \& Exhibition (DATE)}}. IEEE,
  \bibinfo{pages}{1234--1240}.
\newblock


\bibitem[Yiu et~al\mbox{.}(2007)]%
        {yiu2007top}
\bibfield{author}{\bibinfo{person}{Man~Lung Yiu}, \bibinfo{person}{Xiangyuan
  Dai}, \bibinfo{person}{Nikos Mamoulis}, {and} \bibinfo{person}{Michail
  Vaitis}.} \bibinfo{year}{2007}\natexlab{}.
\newblock \showarticletitle{Top-k spatial preference queries}. In
  \bibinfo{booktitle}{\emph{2007 IEEE 23rd International Conference on Data
  Engineering}}. IEEE, \bibinfo{pages}{1076--1085}.
\newblock


\bibitem[Zhang et~al\mbox{.}(2022)]%
        {zhang2022t}
\bibfield{author}{\bibinfo{person}{Jiahao Zhang}, \bibinfo{person}{Bo Tang},
  \bibinfo{person}{Man~Lung Yiu}, \bibinfo{person}{Xiao Yan}, {and}
  \bibinfo{person}{Keming Li}.} \bibinfo{year}{2022}\natexlab{}.
\newblock \showarticletitle{T-LevelIndex: Towards Efficient Query Processing in
  Continuous Preference Space}. In \bibinfo{booktitle}{\emph{Proceedings of the
  2022 International Conference on Management of Data}}.
  \bibinfo{pages}{2149--2162}.
\newblock


\bibitem[Zhang and Korepin(2018)]%
        {zhang2018quantum}
\bibfield{author}{\bibinfo{person}{Kun Zhang} {and} \bibinfo{person}{Vladimir
  Korepin}.} \bibinfo{year}{2018}\natexlab{}.
\newblock \showarticletitle{Quantum partial search for uneven distribution of
  multiple target items}.
\newblock \bibinfo{journal}{\emph{Quantum Information Processing}}
  \bibinfo{volume}{17}, \bibinfo{number}{6} (\bibinfo{year}{2018}),
  \bibinfo{pages}{1--20}.
\newblock


\end{thebibliography}


\begin{appendices}
\end{appendices}

\end{document}